\begin{document}
	
	\title{Expectation and Variance of the Degree of a Node in Random Spanning Trees}
    \author{\normalsize Enrique Fita Sanmartín$$,
\hspace{0.5cm} Christoph Schnörr$$, \hspace{0.5cm} Fred A. Hamprecht\\
\normalsize IWR at Heidelberg University, 69120 Heidelberg, Germany\\
\footnotesize \texttt{\{enrique.fita.sanmartin@iwr,schnoerr@math, fred.hamprecht@iwr\}}\texttt{.uni-heidelberg.de}}
	\date{}
	\maketitle
	\begin{abstract}
		We consider a Gibbs distribution over all spanning trees of an undirected, edge weighted finite graph, where, up to normalization, the probability of each tree is given by the product of its edge weights. Defining the weighted degree of a node as the sum of the weights of its incident edges, we present analytical expressions for the expectation, variance and covariance of the weighted degree of a node across the Gibbs distribution. To generalize our approach, we distinguish between two types of weight: probability weights, which regulate the distribution of spanning trees, and degree weights, which define the weighted degree of nodes. This distinction allows us to define the weighted degree of nodes independently of the probability weights. By leveraging the Matrix Tree Theorem, we show that these degree moments ultimately depend on the inverse of a submatrix of the graph Laplacian. While our focus is on undirected graphs, we demonstrate that our results can be extended to the directed setting by considering incoming directed trees instead. 
	\end{abstract}

	\section{Introduction}
	Graph theory provides a foundational framework for understanding and analyzing a wide range of complex systems, ranging from social networks to transportation networks and beyond. In graph theory, the degree of a node is defined either as the number of its neighbors or, in the case of weighted graphs, as the total sum of the weights of its incident edges. Analyzing the degree of nodes across different graph structures not only sheds light on the topological characteristics of the graph \cite{barabasiEmergenceScalingRandom1999,bollobasProbabilisticProofAsymptotic1980,chungAverageDistancesRandom2002}, but also helps to predict the behavior of dynamic processes within the network, such as propagation phenomena or synchronization \cite{laiAnalysisIdentificationMethods2022,shiozawaDataSynchronizationNode2022,zhangIdentificationKeyNodes2024}. This has implications for various practical applications, from network reliability \cite{lamSimpleMeasureNetwork2014} to the modeling of biological systems \cite{wangStatisticalIdentificationImportant2021}.

The study of expected node degree within the set of all possible spanning trees of a graph is particularly intriguing due to the unique insights it provides into the structural properties of networks under random conditions. Trees can be regarded as the most basic form of connected graphs, as they characterize the minimal properties required for connectivity. Therefore, they play a crucial role in analyzing network properties such as resilience \cite{PhysRevResearch.3.023161} or information flow \cite{Iacobelli2022EpicenterOR}. Investigating the expected degree of a node in a random spanning tree may assist in the assessment of how network structure influences random and optimal processes.

 \textbf{Related work} Previous research in this field has primarily focused on the characteristics of node degree distributions in full graphs and their random subgraphs \cite{albert2002statistical,boccaletti2006complex,newman2003structure}. However, less attention has been given to the study of expected degrees in random spanning trees. For unweighted complete graphs, it is known that the degree distribution of a node in a random spanning tree follows a binomial distribution of $1+\operatorname{Binomial}(N-2,\frac{1}{N})$.\footnote{This follows easily from the Pr\"ufer encoding of labeled trees. Specifically, a tree with $N$ nodes can be represented by a sequence of $N$-2 numbers ranging from 1 to $N$. The frequency of a number $i$ in the code corresponds to the degree of the $i$th node minus one. Since each number may appear with probability $1/N$, it follows that the node degree distribution is given by $1+\operatorname{Binomial}(N-2,\frac{1}{N})$.}
 
 Further empirical work by \cite{willemain2002distribution} utilized Monte Carlo simulations investigate the node degree distribution in maximum spanning trees of networks with randomly distributed nodes in Euclidean space. In contrast, \cite{pozrikidis2016node} proposed a method for calculating the degree distribution of a node in spanning trees of unweighted graphs, without the need to explicitly construct the trees. This method involves derivatives of the Laplacian's determinant but its efficiency diminishes as the number of neighbors of a node increases, primarily due to reliance on the inclusion-exclusion principle. More recently, \cite{rapoportCorrelationsUniformSpanning2023}  proposed an alternative approach to computing the node degree distribution using techniques from quantum mechanics, although in practice it may suffer from similar limitations. 
 
 Recent studies have examined the expected degree of a node in random spanning trees as a discrete measure of graph curvature \cite{devriendtDiscreteCurvatureGraphs2022,devriendtGraphsNonnegativeResistance2024}. In \cite{zmigrodEfficientComputationExpectations2021}, the authors introduced efficient algorithms for calculating expectations over spanning trees constrained to decomposable additive functions along the edges, with applications in natural language processing. The expected degree of a node can be considered as a specific instance of their addressed expectations. Although their algorithms also rely on the derivatives of the Laplacian's determinant, in \cref{sec:add_decomposable}, we demonstrate how these expectations can be expressed as a sum of expectations over degrees, ultimately leading to more compact formulas through our approach.
  
  Moreover, efforts have also been made to analyze properties of the degree distribution of infinite random trees \cite{lyonsProbabilityTreesNetworks2016a,moriRandomTrees2005,rudasRandomTreesGeneral2006,shinodaUniformSpanningTrees2015}.

\textbf{Contributions} In this work, we introduce analytical formulas for calculating both the expected degree of a given node, its variance, and its covariance in random spanning trees of weighted finite graphs. While previous work has focused on the unweighted degree, i.e. the number of neighbors of a node, our work also considers the weighted degree, defined as the sum of the edge weights of the incident edges of a node. We establish a Gibbs probability distribution across all spanning trees of a weighted graph, where the likelihood of each tree is proportional to the product of the edge weights within it. To generalize our approach, we distinguish between two types of weights: probability weights and degree weights. Probability weights influence the likelihood of spanning trees, whereas degree weights determine the node's weighted degree (see \cref{sec:notation-and-preliminary-results} for further details). Notably, our findings encompass the unweighted degree (number of neighbors) as a particular case, achieved when all degree weights are set to one.

Ultimately, our results are grounded in the Matrix Tree Theorem \cite{kirchhoff1847ueber,Tutte1984}, which provides an analytical formula for the total number of spanning trees in a given graph. Using this theorem we show that the derived closed-form expressions for the expected degree of a node, its variance, and its covariance can be formulated in terms of the inverse of the Laplacian of the weighted graph, after the removal of a node (see \cref{th:Expectation_Variance_Laplacian}).

Although not the primary focus of our research, a side result of our derivations provides a method for computing the complete distribution of the node's degrees in random spanning trees (see \cref{rem:complete_degree_distribution}), where the probabilities are given by the coefficients of a polynomial. It is worth noting that our expressions for the expectation, variance and covariance do not require access to the node degree distribution. Additionally, in \cref{sec:add_decomposable}, we show how our approach can be extended to compute expectations over spanning trees constrained to decomposable additive functions, as those investigated in \cite{zmigrodEfficientComputationExpectations2021}.

\textbf{Outline}
We begin with a brief overview of the notation and basic concepts in \cref{sec:notation-and-preliminary-results}. In \cref{sec:polynomial-based-computation-expectation-and-variance-of-node-degree-in-spanning-trees}, we introduce a polynomial from which we derive expressions for the expectation and variance of the degree of a given node in a random spanning tree. Specifically, \cref{th:poly_expvar} shows that the degree's expectation and variance can be expressed using the first and second derivatives of the polynomial evaluated at $-1$.

Moving on to \cref{sec:laplacian-based-computation-expectation-and-variance-of-node-degree-in-spanning-trees}, we utilize the Matrix Tree Theorem (stated in \cref{Th:Matrix_tree}) to connect the derivatives of this polynomial and the determinant of the Laplacian. This connection facilitates the derivation of analytical expressions for the expectation and variance, which are expressed in terms of the entries of the inverse of a submatrix of the Laplacian. The primary outcome of this section is detailed in \cref{th:Expectation_Variance_Laplacian}, which provides the explicit relations. Here we also present the expression for the covariance, although its proof is deferred to \cref{sec:covariance}, as it relies on a slightly different polynomial as the one introduced in \cref{sec:polynomial-based-computation-expectation-and-variance-of-node-degree-in-spanning-trees}.

In \cref{sec:relationship-between-edge-probability-and-expected-node-degree-in-spanning-trees}, we present an alternative approach for computing the expected node degree. This method, presented in \cref{th:Expected_node_degree}, links the probability of an edge incident to $v$ being present in a spanning tree with the expected degree of $v$. Although our results are framed in the undirected setting, \cref{sec:extension-to-directed-graphs} illustrates how these findings extend to directed graphs by focusing on incoming directed trees. Finally, we provide concluding remarks in \cref{sec:degree_conclusion}.
	\section{Notation}\label{sec:notation-and-preliminary-results}
	\paragraph{Edge weight functions} We consider a weighted graph $G=(V,E,w,\omega)$, where the edge weights are defined by the function $w:E\to\mathbb{R}^+$. If $w(e)=0$, it implies that $e\notin E$. In addition, we consider a second edge weight function $\omega:E\to\mathbb{R}$. We differentiate between two weight functions because, as will become clear later, the edge weights given by $w$ determine the probability distribution across the spanning trees, while those given by $\omega$ determine the weighted degree of a node. Hence, we term the weights defined by $w$ and $\omega$ as probability and degree weights, respectively. Note that contrary to the function $w$, $\omega$ may take negative values. For simplicity, we focus primarily on undirected graphs. However, \cref{sec:extension-to-directed-graphs} demonstrates the applicability of our results to directed graphs as well.

For a given node $v\in V$, let $\mathcal{N}_G(v)$ and $\mathcal{E}_G(v)$ represent the set of neighbors and edges incident to $v$ in $G$, respectively. These are defined as:
$$\mathcal{N}_G(v)\coloneqq\{u \ : \ (u,v)\in E\}, \qquad \mathcal{E}_G(v)\coloneqq\{v\}\times\mathcal{N}_G(v)=\{e\in E \ : \ v\in e \}.$$
In this context, we will use the subscript $v$ to represent the subgraph $G_v$, which only contains the edges incident to $v$. Formally, $G_v\coloneqq(V,\mathcal{E}_G(v),w|_{\mathcal{E}_G(v)})$. Conversely, $G_{\backslash v}\coloneqq(V,E\backslash\mathcal{E}_G(v)),w|_{E\backslash\mathcal{E}_G(v)})$ denotes the complementary graph of $G_v$, where the edges incident to $v$ have been removed.

Given a function $\bar{w}:E\to\mathbb{R}$ and a node $v$, we denote the $\bar{w}$-scaled version of $G$ at $v$ by $G\times_v \bar{w}$, where $\times_v$ is an operation modifying the probability weight of the edges incident to $v$ in $G$. The weight probability function $G\times_v \bar{w}$ is given by 
\begin{equation}
\label{eq:G_scaled_operation_v}
w_{G\times_v \bar{w}}(e)=\begin{cases}
w(e)\bar{w}(e) \  &v\in e\\
w(e) \ & \text{otherwise.}
\end{cases}
\end{equation}

\paragraph{Weighted degree}

The weighted degree of a node $v$ is the sum of the degree weights, $\omega(e)$, of the edges $e$ incident to $v$. Formally,
$$\degv[v][G][\omega]\coloneqq\sum_{e\in\mathcal{E}_G(v)} \omega(e)$$
Note that when $\omega(e)=1$ for all edges, the weighted degree of $v$ is equal to the number of neighbors of $v$, i.e. $\degv[v][G][\omega]=|\mathcal{N}_G(v)|$.


\paragraph{Spanning trees}
The set of all spanning trees of $G$ will be denoted by $\Setspt[G]$. If $G$ is clear from the context, we may omit the subscript.  We assign a weight to each spanning tree $\tree$, represented by the product of its probability weights: 
$$w(\tree)\coloneqq\prod_{e\in E_{\tree}}w(e).$$
Similarly, for a set of spanning trees $\mathbb{T}=\{\tree_1,\dots,\tree_m\}$, its weight is defined as the sum of the weights of the trees in the set, i.e.,
$$w(\mathbb{T})=\sum_{\tree\in\mathbb{T}}w(\tree)$$

We establish a probability distribution over the set of spanning trees where the probability of each tree $\tree$ is proportional to its weight. Mathematically, this can be expressed as:
\begin{equation}\label{eq:prob_tree_def}
	\Pr(\tree)=\frac{w(\tree)}{w\left(\Setspt[G]\right)}\propto w(\tree),
\end{equation}
where $w\left(\Setspt[G]\right)\coloneqq\displaystyle\sum_{\tree\in \Setspt[G] }w(\tree)$ is the normalization factor. This distribution can be viewed as a Gibbs distribution, where the probability of a system being in a particular state is proportional to the exponential of the negative energy of that state. In this context, the energy $E(\tree)$, of a spanning tree $\tree$, is given by the negative logarithm of the product of its tree weight: $E(\tree)=-\log(w(\tree))$ Thus, the probability of a spanning tree is proportional to $\exp(-E(\tree))$.
%
%
For unweighted graphs, i.e., when $w(e)=1$ for all $e\in E$, we obtain a uniform distribution over all spanning trees of $G$. 

For a given node $v$, let 
\[\SetsptEw[k][G,v]\coloneqq\{\tree\in\Setspt[G]\ :\ \degv[v][G][\omega]=k \}\]
represent the subset of spanning trees for which the weighted degree of $v$ is equal to $k$. 
Consequently, the expected weighted degree of $v$ in a random spanning tree is given by
\[\mathbb{E}\left[\degv[v][\tree][\omega]\right]\coloneqq\sum_{k\in \Degv[v][G][ \omega]}k\cdot\Prob\left(\tree\in \SetsptEw[k][G,v]\right)=\sum_{k\in \Degv[v][G][\omega]}k\cdot\frac{w\left(\SetsptEw[k][G,v]\right)}{w\left(\Setspt[G]\right)}.\]
Note that the variable $k$ iterates over the set of feasible weighted degrees of $v$, denoted here by $\Degv[v][G][\omega]\coloneqq\{\degv[v][\tree][\omega]\}_{\tree\in\Setspt}$.

\paragraph{The Laplacian and the Matrix Tree Theorem}
The Laplacian matrix plays an important role in the Matrix Tree Theorem (\cref{Th:Matrix_tree}), which forms the foundation of our results. The Laplacian matrix of a graph $G$ is given by
\begin{equation}
	\label{eq:Laplacian_undirected}
	L_G\coloneqq D-A,
\end{equation}
where $A\in\mathbb{R}^{|V|\times|V|}$ is the vertex-adjacency matrix of $G$, represented entry-wise as $A_{uv}=w\big((u,v)\big)$, and $D$ denotes the diagonal matrix defined as $D_{uu}=\sum_{j\in V}A_{uj}$. Note that, unless specified otherwise, the entries of the Laplacian matrix are solely determined by the weight-probability function $w$. In addition, for any node $v\in V$, $L_G^{[v]}$ will stand for the Laplacian after removing the row and column indexed by $v$. Note that the Laplacian of $G$ can be decomposed as the sum of the Laplacians of $G_{v}$ and $G_{\backslash v}$, i.e.
\begin{equation}\label{eq:decomposition_Laplacian}
L_G=L_{G_v}+L_{G_{\backslash v}}
\end{equation}

Now, we're ready to articulate the Matrix Tree Theorem, which provides a closed-form for the normalization factor of the tree probability distribution defined in equation \eqref{eq:prob_tree_def}.
\begin{theorem}[\textbf{Matrix Tree Theorem}, \cite{kirchhoff1847ueber,Tutte1984}]
	\label{Th:Matrix_tree}
	Given an edge weighted undirected graph $G=(V,E,w)$, let $\Setspt$ represent the set of all spanning trees of $G$. Then, the total weight of these spanning trees, denoted as $w(\Setspt)$, can be expressed as:
	\[w(\Setspt)=\sum_{\tree\in \Setspt }w(\tree)= \sum_{\tree\in \Setspt }\prod_{e\in E_{\tree}}w(e) =\det(L_G^{[r]}),\]
	where $r$ is an arbitrary but fixed node, and $L_G^{[r]}$ represents the Laplacian matrix after removing the row and column indexed by $r$.
\end{theorem}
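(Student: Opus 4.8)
The plan is to prove the theorem by the classical route through the (weighted) incidence matrix together with the Cauchy--Binet formula. First I would fix an arbitrary orientation of each edge and introduce the signed incidence matrix $B \in \mathbb{R}^{|V| \times |E|}$ with $B_{v,e} = +1$ if $v$ is the head of $e$, $B_{v,e} = -1$ if $v$ is the tail, and $B_{v,e} = 0$ otherwise. Writing $W = \operatorname{diag}\big(w(e)\big)_{e \in E}$ for the diagonal matrix of probability weights and $\tilde{B} = B W^{1/2}$ (that is, $\tilde{B}_{v,e} = B_{v,e}\sqrt{w(e)}$), a direct entrywise check shows $B W B^\top = \tilde{B}\tilde{B}^\top = L_G$: the diagonal entry at $v$ collects $\sum_{e \ni v} w(e)$ and the entry at $(u,v)$ collects $-w\big((u,v)\big)$, matching \eqref{eq:Laplacian_undirected}. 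Deleting the row indexed by $r$ to obtain the $(|V|-1)\times|E|$ matrix $\tilde{B}^{[r]}$, this factorization restricts to $L_G^{[r]} = \tilde{B}^{[r]}\big(\tilde{B}^{[r]}\big)^\top$.

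Next I would apply the Cauchy--Binet formula to this product of a $(|V|-1)\times|E|$ matrix with its transpose, expanding the determinant as a sum over column subsets:
$$\det\big(L_G^{[r]}\big) = \sum_{S \subseteq E,\ |S| = |V|-1} \det\!\big(\tilde{B}^{[r]}_S\big)^2,$$
where $\tilde{B}^{[r]}_S$ is the square submatrix of $\tilde{B}^{[r]}$ formed by the columns indexed by the edge set $S$. Since scaling column $e$ by $\sqrt{w(e)}$ multiplies the determinant by that factor, $\det\big(\tilde{B}^{[r]}_S\big)^2 = \big(\prod_{e \in S} w(e)\big)\,\det\big(B^{[r]}_S\big)^2$, so it remains only to evaluate the minors of the \emph{unweighted} reduced incidence matrix $B^{[r]}_S$.

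The crux of the argument, and the step I expect to be the main obstacle, is the combinatorial lemma that $\det\big(B^{[r]}_S\big) \in \{-1, 0, +1\}$, equalling $\pm 1$ precisely when the edge set $S$ is a spanning tree of $G$ and $0$ otherwise. For the vanishing case I would argue that any $S$ of size $|V|-1$ that is not a tree must contain a cycle; traversing the cycle consistently yields a nontrivial $\pm 1$ linear combination of the corresponding incidence columns that equals zero, a dependence that survives deletion of the row $r$ and hence forces $\det\big(B^{[r]}_S\big)=0$. For the unit case, if $S$ is a spanning tree I would induct on $|V|$ by repeatedly removing a leaf $\ell \neq r$: such a leaf is incident to exactly one edge of $S$, so the corresponding row of $B^{[r]}_S$ has a single nonzero entry $\pm 1$, and Laplace expansion along that row reduces to the reduced incidence matrix of the tree with $\ell$ deleted, the induction terminating at the trivial $1\times 1$ case with value $\pm 1$. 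Equivalently, one may simply invoke the total unimodularity of incidence matrices.

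Finally I would assemble the pieces: substituting $\det\big(\tilde{B}^{[r]}_S\big)^2 = \prod_{e \in S} w(e)$ for spanning trees and $0$ otherwise into the Cauchy--Binet expansion collapses the sum so that only spanning-tree subsets survive, giving
$$\det\big(L_G^{[r]}\big) = \sum_{\tree \in \Setspt} \prod_{e \in E_{\tree}} w(e) = w\big(\Setspt\big),$$
which is exactly the claimed identity and, as a byproduct, confirms that the right-hand side does not depend on the choice of root $r$.
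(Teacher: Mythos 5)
Your argument is correct, but note that the paper does not prove this statement at all: it is imported as a classical result with citations to Kirchhoff and Tutte, and everything in the paper is built \emph{on top of} it. So there is no in-paper proof to compare against; what you have written is a self-contained proof of the cited black box. Your route is the standard one — factor $L_G = BWB^\top$ through the signed incidence matrix, reduce by deleting the row of $r$, expand $\det\bigl(L_G^{[r]}\bigr)$ via Cauchy--Binet, pull the weights out of each minor by column scaling, and invoke the combinatorial lemma that the square minors of the reduced unweighted incidence matrix are $\pm 1$ exactly on spanning-tree column sets and $0$ otherwise. All the steps check out: the cycle-dependence argument correctly handles the degenerate minors (an acyclic edge set of size $|V|-1$ on $|V|$ vertices is necessarily a spanning tree, so ``not a tree'' does imply ``contains a cycle''), and the leaf-peeling induction for the $\pm 1$ case is sound since any tree on at least two vertices has a leaf distinct from $r$. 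The only cosmetic remark is that the independence of the choice of $r$ is not really a ``byproduct'' needing the full computation — it already follows from the fact that the all-ones vector spans the kernel of $L_G$, so all cofactors agree — but deducing it from the final tree-sum formula, as you do, is equally valid.
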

In other words, the Matrix Tree Theorem states that all the cofactors of the Laplacian matrix are equal and coincide with $w(\Setspt)$. Note that when $G$ is unweighted, meaning $w(e)=1$ for all edges $e$, the theorem states that $\det(L_G^{[r]})$ equals the number of spanning trees of $G$.

	\section{Polynomial-Based Computation of Expectation and Variance of Node Degree in Spanning Trees}\label{sec:polynomial-based-computation-expectation-and-variance-of-node-degree-in-spanning-trees}
	In this section, we will define a polynomial (\cref{def:polynomials_tree_degree}), whose first and second derivatives will prove to be intricately connected to the expectation and variance of the weighted degree of a node in a spanning tree (\cref{th:poly_expvar}).

\begin{definition}\label{def:polynomials_tree_degree}
Given $G=(V,E,w,\omega)$, let $\polyw(x)$ be the polynomial defined as	
\begin{equation}\label{eq:polynomial_weights_W}
	\polyw[v](x)=\sum_{k\in\Degv[v][G]}(-1)^kw\left(\SetsptEw[k][G,v]\right)x^k,
\end{equation}
where we recall that $\SetsptEw[k][G,v]\}$ denotes the set of spanning trees in which node $v$ has weighted degree equal to $k$. Note that $\polyw[v](x)$ depends on the choice of the node $v$. However, the subscript may be omitted if there is no ambiguity regarding $v$.
\end{definition}
It is clear that $\polyw(-1)=w\left(\Setspt[G]\right),$ since the polynomial sums the probability weights of all spanning trees of $G$. Consider now the scaled version of $G$ at $v$ defined by $\polywG\coloneqq G\times_v \alpha^{\omega}$ with $\alpha^{\omega}(e)=\alpha^{\omega(e)}$ for $\alpha>0$. 

The following lemma shows that for $\alpha>0$, $\polyw(-\alpha)$ returns the total probability weight of all spanning trees of $\polywG$.
\begin{lemma}\label{lem:polynomials_count_trees}
	For $\alpha>0$, we find that $\polyw(-\alpha)=\wpolywG\left(\Setspt[\polywG]\right)$, where  
	\begin{equation}\label{eq:weight_function_powered}
	\wpolywG(e)=\begin{cases}
	\alpha^{\omega(e)}w(e) \ &\text{if } e\in\mathcal{E}_G(v),\\
	w(e)\ &\text{otherwise }
	\end{cases}.
	\end{equation}
	is the edge weight probability function of the scaled version of $G$ at $v$, denoted by $\polywG$ and defined above.
\end{lemma}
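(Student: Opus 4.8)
The plan is to exploit the fact that the operation $\times_v$ alters only the edge weights and not the underlying graph, so that $G$ and $\polywG$ possess exactly the same set of spanning trees; in particular, since $\alpha>0$ no incident edge weight is sent to zero, and $\Setspt[G]=\Setspt[\polywG]$. The whole argument then reduces to tracking how the weight of a single fixed spanning tree changes under the re-weighting \eqref{eq:weight_function_powered}, and to collecting trees by their weighted degree at $v$.

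First I would fix a spanning tree $\tree$ and compute its weight $\wpolywG(\tree)=\prod_{e\in E_{\tree}}\wpolywG(e)$ in the scaled graph. Splitting this product into the edges of $\tree$ incident to $v$ and the remaining edges, and invoking \eqref{eq:weight_function_powered}, every incident edge $e$ contributes an extra factor $\alpha^{\omega(e)}$ compared with $w(e)$, whereas all other factors are left unchanged. Factoring out the powers of $\alpha$ gives
\[
\wpolywG(\tree)=\alpha^{\sum_{e\in\mathcal{E}_{\tree}(v)}\omega(e)}\,w(\tree)=\alpha^{\degv[v][\tree][\omega]}\,w(\tree),
\]
where the second equality is the crucial identification: the edges of $\tree$ incident to $v$ are precisely $\mathcal{E}_{\tree}(v)$, so the accumulated exponent is by definition the weighted degree $\degv[v][\tree][\omega]$.

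Next I would sum over all spanning trees and group them according to the value $k$ of the weighted degree of $v$, obtaining
\[
\wpolywG\!\left(\Setspt[\polywG]\right)=\sum_{\tree\in\Setspt}\alpha^{\degv[v][\tree][\omega]}w(\tree)=\sum_{k}\alpha^k\, w\!\left(\SetsptEw[k][G,v]\right).
\]
Finally, evaluating \eqref{eq:polynomial_weights_W} at $-\alpha$ and using $(-1)^k(-\alpha)^k=\alpha^k$ reproduces the identical sum $\sum_k\alpha^k\, w(\SetsptEw[k][G,v])$, which establishes the claim. The computation is elementary throughout; the only point that demands care—rather than being a genuine obstacle—is the bookkeeping in the second step, namely verifying that the $\alpha$-exponent accumulated over the incident tree edges equals $\degv[v][\tree][\omega]$, together with the remark that the spanning-tree sets of $G$ and $\polywG$ coincide so that the two sums range over the same objects.
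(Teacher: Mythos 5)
Your proposal is correct and follows essentially the same route as the paper's proof: the key step in both is the per-tree identity $\wpolywG(\tree)=\alpha^{\degv[v][\tree][\omega]}w(\tree)$, obtained by splitting the edge-weight product over edges incident and not incident to $v$, followed by grouping trees by the weighted degree $k$ and cancelling signs via $(-1)^k(-\alpha)^k=\alpha^k$. Your explicit remark that $\Setspt[G]=\Setspt[\polywG]$ because $\alpha>0$ keeps all weights positive is a small point the paper leaves implicit, but otherwise the arguments coincide.
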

\begin{proof}
	First of all, we notice that for each $\tree\in\SetsptEw[k][G,v]$, we have
	\begin{equation}
		\begin{aligned}
			\alpha^k w(\tree)&=\alpha^k\prod_{e\in E_{\tree}}w(e)=\alpha^k\left(\prod_{e\in \mathcal{E}_{\tree}(v)}w(e)\right)\left(\prod_{\substack{e\in E_{\tree} \\e\notin\mathcal{E}_{\tree}(v)}}w(e)\right)\\
			&\overset{\star}{=}\left(\prod_{e\in \mathcal{E}_{\tree}(v)}w(e)\alpha^{\omega(e)}\right)\left(\prod_{\substack{e\in E_{\tree} \\e\notin\mathcal{E}_{\tree}(v)}}w(e)\right)\\
			&=\left(\prod_{e\in \mathcal{E}_{\tree}(v)}\wpolywG(e)\right)\left(\prod_{\substack{e\in E_{\tree} \\e\notin\mathcal{E}_{\tree}(v)}}\wpolywG(e)\right)\\&=\prod_{e\in E_{\tree}}\wpolywG(e)=\wpolywG(\tree).
		\end{aligned},
	\end{equation}
	In ($\star$) we have used the fact that $\degv[v][\tree][\omega]=\sum_{e\in \mathcal{E}_{\tree}(v)}\omega(e)=k$ since $\tree\in\SetsptEw[k][G,v]$. From here, it follows easily that $$\alpha^kw\left(\SetsptE[k][G,v]\right)=\sum_{\tree\in\SetsptEw[k][G,v]}w(\tree)\alpha^k=\sum_{\tree\in\SetsptEw[k][\polywG,v]}\wpolywG(\tree)=\wpolywG\left(\SetsptEw[k][\polywG,v]\right)$$ 
	and therefore
	\begin{equation}\label{eq:\poly(-alpha)}
	\begin{aligned}
	\polyw(-\alpha)&=\sum_{k\in\Degv[v][G]}(-1)^kw\left(\SetsptE[k][G,v]\right)(-\alpha)^k=\sum_{k\in\Degv[v][G]}w\left(\SetsptE[k][G,v]\right)\alpha^k\\
	&=\sum_{k\in\Degv[v][\polywG]}\wpolywG\left(\SetsptE[k][\polywG,v]\right)=\wpolywG\left(\Setspt[\polywG]\right)
	\end{aligned}
	\end{equation}
    \phantom{?}
\end{proof}

The next result relates the derivatives of the polynomial $\polyw(x)$ with the expectation and variance of the weighted degree of a node in a random spanning tree.
\begin{theorem}
	\label{th:poly_expvar}
	Let $G=(V,E,w,\omega)$ be a graph and $v$ be an arbitrary but fixed node. Then the expectation and variance of the weighted degrees of node $v$ in $G$ are given by:
	\begin{equation}
	\label{eq:poly_exp} \mathbb{E}\left[\degv[v][\tree][\omega]\right]=-\frac{\polyw'(-1)}{\polyw(-1)}
	\end{equation}
	\begin{equation}
	\label{eq:polyw_var}
	\Var\left[\degv[v][\tree][\omega]\right]=\frac{\partial\left(-x\frac{\polyw'(-x))}{\polyw(-x)}\right)}{\partial x}\Biggr|_{x=1}=\frac{\polyw''(-1)}{\polyw(-1)}-\frac{\polyw'(-1)\polyw'(-1)}{\polyw^2(-1)}-\frac{ \polyw'(-1)}{\polyw(-1)}.
	\end{equation}
	where $\polyw(x)$ is defined as in \cref{def:polynomials_tree_degree}.
\end{theorem}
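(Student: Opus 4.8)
The plan is to route every computation through the single-variable function $Z(\alpha)\coloneqq\polyw(-\alpha)$. By \cref{lem:polynomials_count_trees} (and the expansion carried out in its proof) this equals $\sum_{k}w\!\left(\SetsptEw[k][G,v]\right)\alpha^{k}$ and is well defined for every $\alpha>0$. Dividing by $Z(1)=\polyw(-1)=w\!\left(\Setspt[G]\right)$ turns it into a probability generating function, $Z(\alpha)/Z(1)=\sum_{k}\frac{w(\SetsptEw[k][G,v])}{w(\Setspt[G])}\,\alpha^{k}=\mathbb{E}\!\left[\alpha^{\degv[v][\tree][\omega]}\right]$, so that all moments of the weighted degree are encoded in the logarithmic derivatives of $Z$ at $\alpha=1$. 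The only dictionary I need is the chain rule $Z'(\alpha)=-\polyw'(-\alpha)$ and $Z''(\alpha)=\polyw''(-\alpha)$; indeed I will take these identities as the very definition of the symbols $\polyw'(-1)$ and $\polyw''(-1)$, which is the natural reading because $\omega$ may take real and negative values, so the exponents $k$ need not be integers and the isolated factor $(-1)^{k}$ in \cref{def:polynomials_tree_degree} is ambiguous, whereas $\polyw(-\alpha)$ is not.

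For the expectation I would differentiate $Z$ once. Term by term $Z'(1)=\sum_{k}k\,w\!\left(\SetsptEw[k][G,v]\right)=w\!\left(\Setspt[G]\right)\mathbb{E}\!\left[\degv[v][\tree][\omega]\right]$, so dividing by $Z(1)$ and using $Z'(1)=-\polyw'(-1)$ gives \eqref{eq:poly_exp} at once.

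For the variance I would use the same object twice. The direct route computes the factorial moment $Z''(1)=\sum_{k}k(k-1)w\!\left(\SetsptEw[k][G,v]\right)=w\!\left(\Setspt[G]\right)\mathbb{E}\!\left[\degv[v][\tree][\omega]\big(\degv[v][\tree][\omega]-1\big)\right]$, whence $\mathbb{E}\!\left[(\degv[v][\tree][\omega])^{2}\right]=Z''(1)/Z(1)+Z'(1)/Z(1)$; inserting this into $\Var=\mathbb{E}[(\degv[v][\tree][\omega])^{2}]-(\mathbb{E}[\degv[v][\tree][\omega]])^{2}$ and rewriting $Z,Z',Z''$ through $\polyw$ produces the rightmost expression in \eqref{eq:polyw_var}. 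To also justify the compact middle expression, I would recognize $\mu(\alpha)\coloneqq-\alpha\,\polyw'(-\alpha)/\polyw(-\alpha)=\alpha Z'(\alpha)/Z(\alpha)$ as the expected degree of $v$ in the scaled graph $\polywG=G\times_{v}\alpha^{\omega}$: by \cref{lem:polynomials_count_trees} the degree law there assigns mass $\alpha^{k}w(\SetsptEw[k][G,v])/Z(\alpha)$ to the value $k$, whose mean is precisely $\alpha Z'(\alpha)/Z(\alpha)$. Writing $\alpha=e^{t}$ makes $\log Z(e^{t})-\log Z(1)$ the cumulant generating function of $\degv[v][\tree][\omega]$, so its second derivative at $t=0$ equals the variance; by the chain rule this second derivative is $\mu'(1)$, which is exactly the middle expression $\partial_{x}\!\left(-x\,\polyw'(-x)/\polyw(-x)\right)\big|_{x=1}$. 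A one-line differentiation of $\mu$ recovers the rightmost expression, closing the loop.

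The calculation is otherwise routine generating-function bookkeeping; the single point demanding care is the one noted above, namely giving $\polyw$ and its derivatives an unambiguous meaning when the degree values $k$ are arbitrary reals. Committing from the start to the substitution $x=-\alpha$ and to $Z$ as the analytic object being differentiated removes this subtlety and lets the expectation, factorial-moment, and cumulant identities be applied verbatim.
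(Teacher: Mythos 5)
Your proof is correct and takes essentially the same route as the paper: both differentiate the polynomial at $-1$ (equivalently, $Z(\alpha)=\polyw(-\alpha)$ at $\alpha=1$) and read off the first two moments, and your factorial-moment/cumulant packaging is algebraically identical to the paper's direct computation of $\mathbb{E}\left[(\degv[v][\tree][\omega])^2\right]$ via $x\,\partial_x\!\left(x\polyw'(x)\right)$ followed by $\Var=\mathbb{E}[X^2]-(\mathbb{E}[X])^2$ and a direct differentiation check of the middle expression. Your side remark that the factor $(-1)^k$ is ambiguous when the weighted degrees $k$ are non-integer is a fair observation about \cref{def:polynomials_tree_degree}, and your fix of treating $\polyw(-\alpha)$, $\alpha>0$, as the primitive object matches how the paper actually uses the polynomial.
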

\begin{proof}
	The derivative of $\polyw(x)$ is given by:
	\begin{equation}
	\label{eq:th_poly_expvar:diff_\poly(x)}
	\polyw'(x)=\frac{\partial \polyw(x)}{\partial x}=\sum_{k=1}^{|V|}(-1)^kkw\left(\SetsptE[k][G,v]\right)x^{k-1}
	\end{equation}
	Therefore, combining \eqref{eq:\poly(-alpha)} and \eqref{eq:th_poly_expvar:diff_\poly(x)}, we easily derive \eqref{eq:poly_exp}:
	\begin{equation}
	\label{eq:th_poly_expvar:Expect=diffp}
	\begin{aligned}
	-\frac{\polyw'(-1)}{\polyw(-1)}&=-\frac{\sum_{k\in\Degv[v][G][\omega]}(-1)^kkw\left(\SetsptE[k][G,v]\right)(-1)^{k-1}}{w\left(\Setspt[G]\right)}=\sum_{k\in\Degv[v][G][\omega]}k\frac{w\left(\SetsptE[k][G,v]\right)}{w\left(\Setspt[G]\right)}\\
	&=\sum_{k\in\Degv[v][G][\omega]}k\Prob\left(\tree\in\SetsptE[k][G]\right)=\mathbb{E}\left[\degv[v][\tree][\omega]\right]
	\end{aligned}
	\end{equation}

	Now we will prove the expression for the variance stated in \eqref{eq:polyw_var}. Based on equation \eqref{eq:th_poly_expvar:diff_\poly(x)}, we deduce that:
	\begin{equation}
	\label{eq:1diff_x\poly'(x)}
	\begin{aligned}
	x\frac{\partial x\polyw'(x)}{\partial x}&=x\frac{\partial \sum_{k\in\Degv[v][G][\omega]}(-1)^{k}kw\left(\SetsptE[k][G,v]\right)x^{k}}{\partial x}\\&=	x\sum_{k\in\Degv[v][G][\omega]}(-1)^kk^2w\left(\SetsptE[k][G,v]\right)x^{k-1}\\&=\sum_{k\in\Degv[v][G][\omega]}(-1)^kk^2w\left(\SetsptE[k][G,v]\right)x^{k}
	\end{aligned}
	\end{equation}
	On the other hand, using the chain rule, we obtain:
	\begin{equation}
	\label{eq:2diff_xp'(x)}
	x\frac{\partial \left[x\polyw'(x)\right]}{\partial x}=x^2\polyw''(x)+x\polyw'(x)
	\end{equation}
	Therefore, following the same reasoning as in equation \eqref{eq:th_poly_expvar:Expect=diffp}, we derive that
	\begin{equation}
	\begin{aligned}
	\frac{ \polyw''(-1)-\polyw'(-1)}{\polyw(-1)}&=\frac{\sum_{k\in\Degv[v][G][\omega]}k^2w\left(\SetsptE[k][G,v]\right)(-1)^{2k}}{\polyw(-1)}\\
	&=\frac{\sum_{k\in\Degv[v][G][\omega]}k^2w\left(\SetsptE[k][G,v]\right)}{w\left(\Setspt[G]\right)}\\
	&=\sum_{k\in\Degv[v][G][\omega]}k^2\Prob\left(t\in\SetsptE[k][G]\right)=\mathbb{E}\left[(\degv[v][\tree][\omega])^2\right]
	\end{aligned}
	\end{equation}
	Consequently,
	\begin{equation}
	\label{eq:Var=p}
	\Var\left[\degv[v][\tree][\omega]\right]=\mathbb{E}\left[(\degv[v][\tree][\omega])^2\right]-\left(\mathbb{E}\left[\degv[v][\tree][\omega]\right]\right)^2=\frac{\polyw''(-1)}{\polyw(-1)}-\frac{\polyw'(-1)}{\polyw(-1)}-\frac{\polyw'(-1)\polyw'(-1)}{\polyw^2(-1)}.
	\end{equation}
	Finally, notice that		
	\begin{equation}
	\begin{aligned}
	\frac{\partial\left(-x\frac{\polyw'(-x))}{\polyw(-x)}\right)}{\partial x}\Biggr|_{x=1}&=\left[\frac{x \polyw''(-x)}{\polyw(-x)}-\frac{ \polyw'(-x)}{\polyw(-x)}-\frac{x \polyw'(-x)\polyw'(-x)}{\polyw^2(-x)}\right]\Biggr|_{x=1}\\
	&=\frac{\polyw''(-1)}{\polyw(-1)}-\frac{\polyw'(-1)}{\polyw(-1)}-\frac{\polyw'(-1)\polyw'(-1)}{\polyw^2(-1)}\\
	&=\Var\left[\degv[v][\tree][\omega]\right]
	\end{aligned}
	\end{equation}
    \phantom{?}
\end{proof}

\begin{remark}
	A similar approach can be applied to obtain the covariance between $\degv[v]$ and $\degv[u]$ for two nodes $u$ and $v$. However, this involves generalizing the polynomial $\polyw(x)$ to depend on two variables. More details on this generalization are provided in \cref{sec:covariance}.
\end{remark}

	\section{Laplacian-Based Computation of Expectation and Variance of Node Degree in Spanning Trees}\label{sec:laplacian-based-computation-expectation-and-variance-of-node-degree-in-spanning-trees}
    \nopagebreak
	
\def\backgroundcolortext{cyan }
\def\backgroundcolor{cyan!10}
\def\semiringcolor{red}
\def\semiringcolortext{red }


\begin{center}
\begin{table}[t]
	\centering 

		\scalebox{0.82}{
			\begin{tabular}{ccc}

                \multicolumn{2}{c}{\textbf{Graph}}&\textbf{Description}\\
                \hline
                
                \raisebox{-0.5\height}{\begin{tikzpicture}
\tikzset{mynode/.style={draw, circle, inner sep=1pt, font=\scriptsize}}
\tikzset{myedge/.style={blue, line width=1pt}}
\def\scale_g{0.5}

\node[mynode] (1) at (\scale_g,0) {$u$};
\node[mynode] (2) at (-\scale_g,0) {\phantom{2}};
\node[mynode] (3) at (0,\scale_g) {\phantom{3}};
\node[mynode] (4) at (0,-\scale_g) {\phantom{4}};
\node[mynode] (5) at (-\scale_g*2,0) {\phantom{5}};
\node[mynode] (v) at (0,0) {$v$};

\node at (\scale_g*1.7,\scale_g*1.0) {\phantom{\tiny$\times \omega(e)^p$}};


\draw (v) -- (1) node[midway,  above] {};
\draw (v) -- (2) node[midway,  above] {};
\draw (v) -- (3) node[midway,  above, left, pos=0.4] {};
\draw (v) -- (4) node[midway,  above, right, pos=0.4] {};


\draw (1) -- (3) node[midway, sloped, above] {};
\draw (3) -- (2) node[midway, sloped, above] {};
\draw (2) -- (4) node[midway, sloped, below] {};
\draw (2) -- (5) node[midway, sloped, below] {};
\draw (4) -- (1) node[midway, sloped, below] {};

\end{tikzpicture}} &$G$&\scriptsize Base graph\\\hline

                \raisebox{-0.5\height}{\begin{tikzpicture}
\tikzset{mynode/.style={draw, circle, inner sep=1pt, font=\scriptsize}}
\tikzset{myedge/.style={blue, line width=1.5pt}}
\def\scale_g{0.5}

\node[mynode] (1) at (\scale_g,0) {$u$};
\node[mynode] (2) at (-\scale_g,0) {\phantom{2}};
\node[mynode] (3) at (0,\scale_g) {\phantom{3}};
\node[mynode] (4) at (0,-\scale_g) {\phantom{4}};
\node[mynode] (5) at (-\scale_g*2,0) {\phantom{5}};
\node[mynode] (v) at (0,0) {$v$};

\node at (\scale_g*1.7,\scale_g*1.0) {\phantom{\tiny$\times \omega(e)^p$}};


\draw[] (v) -- (1) node[midway,  above] {};
\draw[] (v) -- (2) node[midway,  above] {};
\draw[] (v) -- (3) node[midway,  above, left, pos=0.4] {};
\draw[] (v) -- (4) node[midway,  above, right, pos=0.4] {};


\end{tikzpicture}} &$G_v$&\scriptsize Graph neighborhood of $v$\\\hline
                
                \raisebox{-0.5\height}{\begin{tikzpicture}

\tikzset{mynode/.style={draw, circle, inner sep=1pt, font=\scriptsize}}
\tikzset{myedge/.style={blue, line width=1.5pt}}
\def\scale_g{0.5}

\node[mynode] (1) at (\scale_g,0) {$u$};
\node[mynode] (2) at (-\scale_g,0) {\phantom{2}};
\node[mynode] (3) at (0,\scale_g) {\phantom{3}};
\node[mynode] (4) at (0,-\scale_g) {\phantom{4}};
\node[mynode] (5) at (-\scale_g*2,0) {\phantom{5}};
\node[mynode] (v) at (0,0) {$v$};

\node at (\scale_g*1.7,\scale_g*1.0) {\phantom{\tiny$\times \omega(e)^p$}};




\draw (1) -- (3) node[midway, sloped, above] {};
\draw (3) -- (2) node[midway, sloped, above] {};
\draw (2) -- (4) node[midway, sloped, below] {};
\draw (2) -- (5) node[midway, sloped, below] {};
\draw (4) -- (1) node[midway, sloped, below] {};
\end{tikzpicture}} &$G_{\backslash v}$&\makecell{\scriptsize Complementary graph of $G_v$}\\\hline
                
                \raisebox{-0.5\height}{\begin{tikzpicture}
\tikzset{mynode/.style={draw, circle, inner sep=1pt, font=\scriptsize}}
\tikzset{myedge/.style={black, line width=1.5pt}}
\def\scale_g{0.5}

\node[mynode] (1) at (\scale_g,0) {$u$};
\node[mynode] (2) at (-\scale_g,0) {\phantom{2}};
\node[mynode] (3) at (0,\scale_g) {\phantom{3}};
\node[mynode] (4) at (0,-\scale_g) {\phantom{4}};
\node[mynode] (5) at (-\scale_g*2,0) {\phantom{5}};
\node[mynode] (v) at (0,0) {$v$};

\node at (\scale_g*1.7,\scale_g*1.0) {\tiny$\times \omega(e)^p$};


\draw[myedge] (v) -- (1) node[midway,  above] {};
\draw[myedge] (v) -- (2) node[midway,  above] {};
\draw[myedge] (v) -- (3) node[midway,  above, left, pos=0.4] {};
\draw[myedge] (v) -- (4) node[midway,  above, right, pos=0.4] {};


\end{tikzpicture}} &$\Gwp[p]_v\coloneqq G_v\times_v\omega^p$& \makecell{\scriptsize Probality weight of the edge, $e$, incident to $v$ is scaled by $\omega(e)^p$.\\\scriptsize Rest of probality weights are set to $0$}\\\hline
                
                \raisebox{-0.5\height}{\begin{tikzpicture}
\tikzset{mynode/.style={draw, circle, inner sep=1pt, font=\scriptsize}}
\tikzset{myedge/.style={black, line width=1.5pt}}
\def\scale_g{0.5}

\node[mynode] (1) at (\scale_g,0) {$u$};
\node[mynode] (2) at (-\scale_g,0) {\phantom{2}};
\node[mynode] (3) at (0,\scale_g) {\phantom{3}};
\node[mynode] (4) at (0,-\scale_g) {\phantom{4}};
\node[mynode] (5) at (-\scale_g*2,0) {\phantom{5}};
\node[mynode] (v) at (0,0) {$v$};

\node at (\scale_g*1.7,\scale_g*1.0) {\tiny$\times \omega(e)^p$};


\draw[myedge] (v) -- (1) node[midway,  above] {};


\end{tikzpicture}} &$\left(\Gwp[p]_{v}\right)_{u}\coloneqq \left(G_v\times_v \omega^p\right)_u$& \makecell{\scriptsize Probality weight of the edge, $(u,v)$ is scaled by $\omega(e)^p$.\\\scriptsize Rest of probality weights are set to $0$.\\\scriptsize If $(u,v)\notin E_G$, then all probability weights are 0.}\\\hline                
                
                \raisebox{-0.5\height}{\begin{tikzpicture}

\tikzset{mynode/.style={draw, circle, inner sep=1pt, font=\scriptsize}}
\tikzset{myedge/.style={black, line width=1.5pt}}
\def\scale_g{0.5}

\node[mynode] (1) at (\scale_g,0) {$u$};
\node[mynode] (2) at (-\scale_g,0) {\phantom{2}};
\node[mynode] (3) at (0,\scale_g) {\phantom{3}};
\node[mynode] (4) at (0,-\scale_g) {\phantom{4}};
\node[mynode] (5) at (-\scale_g*2,0) {\phantom{5}};
\node[mynode] (v) at (0,0) {$v$};

\node at (\scale_g*1.7,\scale_g*1.0) {\tiny$\times \alpha^{\omega(e)}$};


\draw[myedge] (v) -- (1) node[midway,  above] {};
\draw[myedge] (v) -- (2) node[midway,  above] {};
\draw[myedge] (v) -- (3) node[midway,  above, left, pos=0.4] {};
\draw[myedge] (v) -- (4) node[midway,  above, right, pos=0.4] {};


\draw (1) -- (3) node[midway, sloped, above] {};
\draw (3) -- (2) node[midway, sloped, above] {};
\draw (2) -- (4) node[midway, sloped, below] {};
\draw (2) -- (5) node[midway, sloped, below] {};
\draw (4) -- (1) node[midway, sloped, below] {};
\end{tikzpicture}} &$\polywG\coloneqq G\times_v\alpha^{\omega}$&\makecell{\scriptsize Probality weight of the edge, $e$, incident to $v$ is scaled by $\alpha^{\omega(e)}$.\\\scriptsize Rest of probality weights are not modified.}\\\hline
                
                \raisebox{-0.5\height}{\begin{tikzpicture}
\tikzset{mynode/.style={draw, circle, inner sep=1pt, font=\scriptsize}}
\tikzset{myedge/.style={black, line width=1.5pt}}
\def\scale_g{0.5}

\node[mynode] (1) at (\scale_g,0) {$u$};
\node[mynode] (2) at (-\scale_g,0) {\phantom{2}};
\node[mynode] (3) at (0,\scale_g) {\phantom{3}};
\node[mynode] (4) at (0,-\scale_g) {\phantom{4}};
\node[mynode] (5) at (-\scale_g*2,0) {\phantom{5}};
\node[mynode] (v) at (0,0) {$v$};

\node at (\scale_g*1.7,\scale_g*1.0) {\tiny$\times \alpha^{\omega(e)}$};


\draw[myedge] (v) -- (1) node[midway,  above] {};
\draw[myedge] (v) -- (2) node[midway,  above] {};
\draw[myedge] (v) -- (3) node[midway,  above, left, pos=0.4] {};
\draw[myedge] (v) -- (4) node[midway,  above, right, pos=0.4] {};


\end{tikzpicture}} &$\polywG_{v}\coloneqq G_v\times_v\alpha^{\omega}$&\makecell{\scriptsize Probality weight of the edge, $e$, incident to $v$ is scaled by $\alpha^{\omega(e)}$.\\\scriptsize Rest of probality weights are set to $0$}\\\hline

		\end{tabular}
	}
	
	\caption[]{Notation of graphs defined throughout the manuscript. The thick edges indicate those that are scaled by the corresponding term.}
	\label{tab:graph_table}

\end{table}%
\end{center}

In this section, we first show that the polynomial $\polyw(x)$ introduced in \cref{def:polynomials_tree_degree} can be obtained from the determinant of the Laplacian matrix of $\polywG$, the scaled version of $G$ at $v$ used in \cref{lem:polynomials_count_trees}. Building on the insights from \cref{th:poly_expvar} and the Matrix Tree Theorem (\cref{Th:Matrix_tree}), we establish in \cref{th:Expectation_Variance_Laplacian} a connection between the derivatives of this polynomial and the determinant of the Laplacian of $\polywG$. This connection allows us to derive analytical expressions for the expected weighted degree and its variance, expressed in terms of the inverse of a submatrix of the Laplacian. We will also state an expression for the covariance, though its proof is deferred to \cref{sec:covariance}.

Given an arbitrary node $r$ of $G$, we deduce from the Matrix Tree Theorem (\cref{Th:Matrix_tree}) and \cref{lem:polynomials_count_trees} that 
\begin{align}\label{eq:poly_Laplacian_relation1}
\det \left(L_{\polywG}^{[r]}\right)=\wpolywG\left(\Setspt[\polywG]\right)=\polyw(-\alpha),
\end{align}
where the exponent $[r]$ indicates that the row and column indexed by $r$ have been removed from the matrix. Thus, the polynomial $\polyw(x)$ can be expressed as the determinant of the Laplacian of $\polywG$.  \Cref{th:Expectation_Variance_Laplacian} leverages the relationship established in \eqref{eq:poly_Laplacian_relation1} to obtain expressions for the expectation and variance of the weighted degree of a node $v$ in terms of the Laplacian matrix. Specifically, \cref{th:Expectation_Variance_Laplacian} applies \cref{th:poly_expvar}, which relates the derivatives of $\polyw$ to the expectation and variance of the degree of $v$ in a random spanning tree. 

To prepare for the proof of \cref{th:Expectation_Variance_Laplacian}, we first introduce a new scaled version of $G$ at $v$. For $p\in\mathbb{N}$, let $\Gwp\coloneqq G\times_v \omega^p$, with $\omega^p(e)=(\omega(e))^p$. As will be shown next, the Laplacian of $\Gwp$ for $p=1,2$ will be useful in expressing the expectation and variance of the degree of a node in a random spanning tree. In concrete, we will show that its Laplacian is related to the derivative of the Laplacian of $\polywG$ as a function of $\alpha$.

\begin{theorem}
	\label{th:Expectation_Variance_Laplacian}
	Let $G=(V,E,w,\omega)$ be an undirected, probability-weighted connected graph and $r$ an arbitrary node of $G$. For a given node $v$, we have that:
	\begin{align}
	\mathbb{E}\left[\degv[v][\tree][\omega]\right]&=\operatorname{Tr}\left[\left(\Lv[1]\right)^{[r]}\left(L_G^{[r]}\right)^{-1}\right],\\
	\Var\left[\degv[v][\tree][\omega]\right]&=\operatorname{Tr}\left[\left(\left(\Lv[2]\right)^{[r]}-\left(\Lv[1]\right)^{[r]} \left(L_G^{[r]}\right)^{-1}\left(\Lv[1]\right)^{[r]} \right)\left(L_G^{[r]}\right)^{-1}\right]\label{eq:variance_laplacian_main},
	\end{align}%
	where $L_G$ is the Laplacian of $G$, $\Lv[p]$ is the Laplacian matrix of $\Gwp\coloneqq G\times_v \omega^p$ as defined above, $L_G^{[r]}$ and $\left(\Lv[p]\right)^{[r]}$ represent the corresponding matrices once the row and column indexed by node $r$ have been removed and $\operatorname{Tr}$ is the trace operator.
	
	Moreover, for a second node $u$, the covariance between the $\degv[v][\tree][\omega]$ and $\degv[u][\tree][\omega]$ is given by
	\begin{equation}\label{eq:covariance_laplacian_main}
	\begin{aligned}
	\Cov(\degv[v][\tree][\omega],\degv[u][\tree][\omega])=\operatorname{Tr}\left[\left(\left(\Lvu[2][v][u]\right)^{[r]}-\left(\Lv[1]\right)^{[r]} \left(L_G^{[r]}\right)^{-1}\left(\Lu[1]\right)^{[r]} \right)\left(L_G^{[r]}\right)^{-1}\right]
	\end{aligned}
	\end{equation}
\end{theorem}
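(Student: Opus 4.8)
The plan is to convert the derivative identities of \cref{th:poly_expvar} into Laplacian expressions by combining the determinantal relation \eqref{eq:poly_Laplacian_relation1} with Jacobi's formula for the derivative of a determinant. Write $M(\alpha)\coloneqq L_{\polywG}^{[r]}$ and $f(\alpha)\coloneqq\det M(\alpha)=\polyw(-\alpha)$. The change of variable $x=-\alpha$ turns the evaluations at $x=-1$ needed in \eqref{eq:poly_exp}--\eqref{eq:polyw_var} into evaluations at $\alpha=1$, since $\polyw(-1)=f(1)$, $\polyw'(-1)=-f'(1)$ and $\polyw''(-1)=f''(1)$. At $\alpha=1$ the scaled graph $\polywG$ reduces to $G$, so $M(1)=L_G^{[r]}$, which is invertible because $G$ is connected; hence every quotient below is well defined.

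The first step is to identify the $\alpha$-derivatives of $M$. The Laplacian is linear in the edge weights, and only the edges incident to $v$ carry the factor $\alpha^{\omega(e)}$; differentiating $\alpha^{\omega(e)}w(e)$ and evaluating at $\alpha=1$ gives $\omega(e)w(e)$ and $\omega(e)(\omega(e)-1)w(e)$ for the first and second derivatives. Summing the corresponding single-edge Laplacians over $e\in\mathcal{E}_G(v)$ shows $M'(1)=(\Lv[1])^{[r]}$ and $M''(1)=(\Lv[2])^{[r]}-(\Lv[1])^{[r]}$, where $\Lv[p]$ is the reduced Laplacian of the graph whose only edges are those incident to $v$, each reweighted by $\omega^p$. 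I would confirm these identities entrywise on the diagonal and off-diagonal entries touching $v$.

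The expectation is then immediate from Jacobi's formula $f'(\alpha)=f(\alpha)\operatorname{Tr}[M(\alpha)^{-1}M'(\alpha)]$, which yields $\mathbb{E}[\degv[v][\tree][\omega]]=f'(1)/f(1)=\operatorname{Tr}[(\Lv[1])^{[r]}(L_G^{[r]})^{-1}]$. For the variance I would differentiate once more, using $\frac{d}{d\alpha}M^{-1}=-M^{-1}M'M^{-1}$, to obtain $f''/f=(\operatorname{Tr}[M^{-1}M'])^2-\operatorname{Tr}[M^{-1}M'M^{-1}M']+\operatorname{Tr}[M^{-1}M'']$, and substitute this together with $f'/f$ into \eqref{eq:polyw_var}. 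Two cancellations occur: the $(\operatorname{Tr}[M^{-1}M'])^2$ term produced by $f''/f$ cancels the squared mean $(f'/f)^2$, and the $-(\Lv[1])^{[r]}$ summand inside $M''(1)$ cancels the additive $f'/f$ term of \eqref{eq:polyw_var}. What remains is $\operatorname{Tr}[M^{-1}(\Lv[2])^{[r]}]-\operatorname{Tr}[M^{-1}(\Lv[1])^{[r]}M^{-1}(\Lv[1])^{[r]}]$, and one use of the cyclic invariance of the trace gives exactly \eqref{eq:variance_laplacian_main}. I expect this second-derivative bookkeeping, and checking that the cancellation leaves precisely the claimed form, to be the main obstacle, since sign and index slips are easiest here.

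For the covariance I would run the same argument with a two-parameter scaling. Set $N(\alpha,\beta)\coloneqq L^{[r]}_{(G\times_v\alpha^{\omega})\times_u\beta^{\omega}}$ and $g\coloneqq\det N$; a direct analogue of \cref{lem:polynomials_count_trees} gives $g(\alpha,\beta)=\sum_{\tree}w(\tree)\,\alpha^{\degv[v][\tree][\omega]}\beta^{\degv[u][\tree][\omega]}$, so at $(1,1)$ the ratios $g_\alpha/g$, $g_\beta/g$ and $g_{\alpha\beta}/g$ equal $\mathbb{E}[\degv[v]]$, $\mathbb{E}[\degv[u]]$ and $\mathbb{E}[\degv[v]\degv[u]]$. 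The bivariate version of Jacobi's formula then gives $\Cov(\degv[v][\tree][\omega],\degv[u][\tree][\omega])=\operatorname{Tr}[N^{-1}N_{\alpha\beta}]-\operatorname{Tr}[N^{-1}N_\beta N^{-1}N_\alpha]$ at $(1,1)$. The delicate point is the mixed derivative: edges incident to only one of $u,v$ depend on a single parameter and drop out of $N_{\alpha\beta}$, so $N_{\alpha\beta}(1,1)$ reduces to the single-edge term $(\Lvu[2][v][u])^{[r]}$ carried by the shared edge $(u,v)$, and vanishes when $u$ and $v$ are non-adjacent. Substituting $N_\alpha(1,1)=(\Lv[1])^{[r]}$ and $N_\beta(1,1)=(\Lu[1])^{[r]}$ and rearranging with the cyclic trace yields \eqref{eq:covariance_laplacian_main}; this is the portion the paper defers to \cref{sec:covariance}.
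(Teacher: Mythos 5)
Your proposal is correct and follows essentially the same route as the paper: both convert the derivative identities of \cref{th:poly_expvar} into Laplacian form via the relation $\det\bigl(L_{\polywG}^{[r]}\bigr)=\polyw(-\alpha)$ and Jacobi's formula, identify $\partial_\alpha L_{\polywG}|_{\alpha=1}$ with $\Lv[1]$, and defer the covariance to a two-parameter scaling with a mixed partial that collapses to the shared-edge term. The only cosmetic difference is that the paper differentiates $\log\det$ with the operator $\alpha\,\partial_\alpha$ (so the $-\bigl(\Lv[1]\bigr)^{[r]}$ correction inside $M''(1)$ never appears explicitly), whereas you track $f'$, $f''$ directly and cancel that term against the additive $f'/f$ contribution — an equivalent bookkeeping choice.
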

\begin{remark}
	In the proof of \cref{th:Expectation_Variance_Laplacian} we will utilize the following derivative rules for matrices \cite{magnusMatrixDifferentialCalculus2019}:
	\begin{equation}\label{eq:matrix_derivatives}
	\begin{aligned}
	&\partial\left(\log\left(\det (X)\right)\right)=\operatorname{Tr}\left[(\partial X) X^{-1}\right], \qquad &\partial\left(\operatorname{Tr}(X)\right)=\operatorname{Tr}(\partial X), \\ &\partial\left(XY\right)=\partial(X)Y+X\partial(Y), \qquad &\partial\left(X^{-1}\right)=-X^{-1}\left(\partial X\right)X^{-1}.
	\end{aligned}
	\end{equation}
\end{remark}
\begin{proof}
	We will only prove the expressions for the expectation and variance, as the covariance expression is derived from a more general polynomial that generalizes the one introduced in \cref{def:polynomials_tree_degree}. Since the reasoning for deriving the covariance is analogous to that for the expectation and variance once the appropriate polynomial is defined, we defer its proof to \cref{sec:covariance}.
	
	First, notice the following relation:
	\begin{equation}\label{eq:relation log_poly with expectation}
	x\frac{\partial \log \poly(-x)}{\partial x}=-x\frac{\poly'(-x)}{\poly(-x)}\xrightarrow[\text{\cref{th:poly_expvar}}]{x=1} \mathbb{E}\left[\degv[v][\tree]\right],
	\end{equation}
	where the last part is a consequence of equation \eqref{eq:poly_exp} stated in \cref{th:poly_expvar}. Therefore, in combination with equation \eqref{eq:poly_Laplacian_relation1}, we deduce that we just need to derive $\log\det\left(L_{\polywG}^{[r]}\right)$ at $\alpha=1$. From the derivative rules stated in \eqref{eq:matrix_derivatives} we know that 	
	\begin{equation}\label{eq:derivative_log_det_L}
	\alpha\frac{\partial \log\det\left(L_{\polywG}^{[r]}\right)}{\partial \alpha}=\alpha\operatorname{Tr}\left[\frac{\partial  L_{\polyG}^{[r]}}{\partial \alpha}\left(L_{\polywG}^{[r]}\right)^{-1}\right]
	\end{equation}
	Similarly as in equation \eqref{eq:decomposition_Laplacian}, the Laplacian of $\polywG$ can be decomposed as
	\begin{equation}\label{eq:Laplacians_poly_graphs}
	L_{\polywG}=L_{\polywG_{\backslash v}}+L_{\polywG_{v}}=L_{G_{\backslash v}}+L_{\polywG_{v}}.
	\end{equation}
	Since the probability weights of $G$ and $\polywG$ only differ on the edges incident to $v$, only the second term of the sum depends on $\alpha$. Thus, we have that the derivative evaluated at $\alpha=1$ is given by
	\begin{equation}\label{eq:derivative_Laplacian}
	\begin{aligned}
	\frac{\partial L_{\polywG}}{\partial \alpha}\Biggr|_{\alpha=1}&=\frac{\partial L_{\polywG_{v}}}{\partial \alpha}\Biggr|_{\alpha=1}=\frac{\partial L_{G_v\times_v \alpha^{\omega}}}{\partial \alpha}\Biggr|_{\alpha=1}=L_{G_v\times_v\omega\alpha^{\omega-1}}\Biggr|_{\alpha=1}= L_{G_v\times_v\omega}\eqqcolon\Lv[1].
	\end{aligned}
	\end{equation}
	Finally, combining equations \eqref{eq:poly_Laplacian_relation1}, \eqref{eq:relation log_poly with expectation}, \eqref{eq:derivative_log_det_L} and \eqref{eq:derivative_Laplacian} we obtain that%
	\begin{equation}\label{eq1:Expectation_Variance_Laplacian}
	\begin{aligned}
	\mathbb{E}\left[\degv[v][\tree][\omega]\right]&\underbrace{=}_{\eqref{eq:relation log_poly with expectation}}\alpha\frac{\partial \log \polyw(-\alpha)}{\partial \alpha}\Biggr|_{\alpha=1}\underbrace{=}_{\eqref{eq:poly_Laplacian_relation1}}\alpha\frac{\partial \log\det\left(L_{\polywG}^{[r]}\right)}{\partial \alpha}\Biggr|_{\alpha=1}\\
	&\underbrace{=}_{\eqref{eq:derivative_log_det_L}}\operatorname{Tr}\left[\alpha\frac{\partial L_{\polyG}^{[r]}}{\partial \alpha}\left(L_{\polywG}^{[r]}\right)^{-1}\right]\Biggr|_{\alpha=1}\underbrace{=}_{\eqref{eq:derivative_Laplacian}}\operatorname{Tr}\left[\left(\Lv[1]\right)^{[r]}\left(L_G^{[r]}\right)^{-1}\right].
	\end{aligned}
	\end{equation}

	Similarly, we will prove the variance equality. We know from equation \eqref{eq:th_poly_expvar:Expect=diffp} of \cref{th:poly_expvar} that $$\Var\left[\degv[v][\tree]\right]=\frac{\partial\left(-x\frac{\poly'(-x))}{\poly(-x)}\right)}{\partial x}\Biggr|_{x=1}.$$
	Thus, we just need to derive \eqref{eq1:Expectation_Variance_Laplacian} with respect to $\alpha$ and evaluate it at $1$:
	\begin{equation*}\label{eq3:Expectation_Variance_Laplacian}
	\begin{aligned}
	\Var\left[\degv[v][\tree][\omega]\right]&=\frac{\partial\operatorname{Tr}\left[\alpha\frac{\partial L_{\polyG}^{[r]}}{\partial \alpha}\left(L_{\polywG}^{[r]}\right)^{-1}\right]}{\partial \alpha}\Biggr|_{\alpha=1}\\&=\frac{\partial\operatorname{Tr}\left[\alpha \left(L_{G_v\times_v\omega\alpha^{\omega-1}}\right)^{[r]}\left(L_{\polywG}^{[r]}\right)^{-1}\right]}{\partial \alpha}\Biggr|_{\alpha=1}	\\
	&\underbrace{=}_{\eqref{eq:derivative_Laplacian}\& \eqref{eq:matrix_derivatives} }\operatorname{Tr}\left[\left(L_{G_v\times_v\omega^2\alpha^{\omega-1}}\right)^{[r]}\left(L_{\polyG}^{[r]}\right)^{-1}\right]\Biggr|_{\alpha=1}\\
	&-\operatorname{Tr}\left[\left(L_{G_v\times_v\omega\alpha^{\omega}}\right)^{[r]}\left(L_{\polyG}^{[r]}\right)^{-1}\left(L_{G_v\times_v\omega\alpha^{\omega}}\right)^{[r]}\left(L_{\polyG}^{[r]}\right)^{-1}\right]\Biggr|_{\alpha=1}\\
	&=\operatorname{Tr}\left[\left(\Lv[2]\right)^{[r]}\left(L_G^{[r]}\right)^{-1}-\left(\Lv[1]\right)^{[r]} \left(L_G^{[r]}\right)^{-1}\left(\Lv[1]\right)^{[r]} \left(L_G^{[r]}\right)^{-1}\right]\\
	&=\operatorname{Tr}\left[\left(\left(\Lv[2]\right)^{[r]}-\left(\Lv[1]\right)^{[r]} \left(L_G^{[r]}\right)^{-1}\left(\Lv[1]\right)^{[r]} \right)\left(L_G^{[r]}\right)^{-1}\right].
	\end{aligned}
	\end{equation*}
	\vphantom{.}
\end{proof}

\begin{remark}[\textbf{Computation probability distribution of node degree}]\label{rem:complete_degree_distribution}
	It should be noted that while our focus has been on computing the expectation and variance of the degree of a node in a spanning tree, equation \eqref{eq:poly_Laplacian_relation1} offers a method for calculating the entire probability distribution of the node degree. By evaluating the determinant with respect to $\alpha$ and dividing by the total weight of the trees, the absolute values of the coefficients multiplying the monomials $\alpha^i$ reveal the probability of sampling a tree where the degree of node $v$ is equal to $i$. This method appears to offer greater efficiency compared to the approach outlined in \cite{pozrikidis2016node}, which also relies on determinant computation but additionally employs the inclusion-exclusion principle on edges incident to $v$. Consequently, it may not scale optimally with the number of neighbors of $v$. Furthermore, our method extends to weighted graphs, whereas the method described in the aforementioned reference is limited to unweighted ones.

 Moreover, the result presented in \cref{th:Expectation_Variance_Laplacian} can also be applied to compute other expectations over spanning trees when constrained to additively decomposable functions along the edges \cite{zmigrodEfficientComputationExpectations2021}. We refer to \cref{sec:add_decomposable} for further details.
\end{remark}

\begin{example}[\textbf{Node $v$ incident to all nodes with constant degree and probability weight functions at \bm{$\mathcal{E}_v$}}]\label{example:all_adjacent_exp}
	Consider the scenario where node $v$ is connected to all other nodes, i.e. $\mathcal{N}_v=V\backslash\{v\}$. Moreover, assume that the edge weights of all edges incident to $v$ are equal, meaning that $w(e)=\kappa_1$ and $\omega(e)=\kappa_2$, for all $e\in\mathcal{E}_v$. Let $\bar{G}=(V\backslash\{v\},E\backslash\mathcal{E}_G(v))$ represent the graph obtained by removing node $v$ from $G$. In this case, the Laplacian matrix $\Lv[1]^{[v]}$ is $\kappa_1\kappa_2$ times the identity matrix. Moreover, we have that
	$$L^{[v]}_{G}=L_{\bar{G}}+\kappa_1I^{[v]}.$$
	
	Let $\{\mu_i\}$ and $\{\lambda_i\}$ be the eigenvalues of $L^{[v]}_{G}$ and $L_{\bar{G}}$, respectively. They are related as follows:
	\[\mu_i=\lambda_i+\kappa_1.\]

	Thus, from \cref{th:Expectation_Variance_Laplacian}, it follows that when a node $v$ is connected to all other nodes with equal edge weights, the expected weighted degree of $v$ is related to the eigenvalues of the Laplacian of the graph with $v$ removed:		
	\begin{equation}
	\label{eq:all_adjacent_exp}
	\begin{aligned}
	\mathbb{E}\left[\degv[v][\tree][\omega]\right]&=\operatorname{Tr}\left(\underbrace{\Lv[1]^{[v]}}_{\kappa_1\kappa_2 I^{[v]}}\left(L_{G}^{[v]}\right)^{-1}\right)=\operatorname{Tr}\left(\kappa_1\kappa_2\left(L_{G}^{[v]}\right)^{-1}\right)\\
	&=\sum_{i=1}^{|V|-1}\frac{\kappa_1\kappa_2}{\mu_i}=\sum_{i=1}^{|V|-1}\frac{\kappa_1\kappa_2}{\lambda_i+\kappa_1}.
	\end{aligned}
	\end{equation}
	We have utilized the fact that the trace of a matrix equals the sum of its eigenvalues, and that the eigenvalues of the inverse of a matrix are the reciprocals of the eigenvalues of the original matrix. Analogously, for the variance, we obtain:
	\begin{equation}
		\label{eq:all_adjacent_var}
		\begin{aligned}
			\Var\left[\degv[v][\tree][\omega]\right]
			&=\operatorname{Tr}\left[\kappa_1\kappa_2^2 \left(L_G^{[v]}\right)^{-1}-\kappa_1^2\kappa_2^2 \left(L_G^{[v]}\right)^{-2}\right]\\
			&=\sum_{i=1}^{|V|-1}\frac{\kappa_1\kappa_2^2}{\lambda_i+\kappa_1}-\left(\frac{\kappa_1\kappa_2}{\lambda_i+\kappa_1}\right)^2=\sum_{i=1}^{|V|-1}\frac{\kappa_1\kappa_2^2\lambda_i}{\left(\lambda_i+\kappa_1\right)^2}
		\end{aligned}
	\end{equation}

\end{example}

For a complete unweighted graph with $N$ nodes, i.e. $w(e)=\omega(e)=1 \ \forall e$, the eigenvalues of its Laplacian matrix are $\lambda_1=0$ and $\lambda_i=N$ for $2\leq i \leq N$. As a consequence of equations \eqref{eq:all_adjacent_exp} and \eqref{eq:all_adjacent_var}, we can determine the expectation and variance of a node's degree in a random spanning tree with $N$ nodes:
\begin{align*}
	\mathbb{E}\left[\degv[v][\tree]\right]&=\sum_{i=1}^{N-1}\frac{1}{\lambda_i+1}=1+\sum_{i=1}^{N-2}\frac{1}{N-1+1}=1+\frac{N-2}{N},\\
	\Var\left[\degv[v][\tree]\right]&=\sum_{i=1}^{N-1}\frac{\lambda_i}{\left(\lambda_i+1\right)^2}=\sum_{i=1}^{N-2}\frac{N-1}{\left(N-1+1\right)^2}=\frac{(N-2)(N-1)}{N^2}.
\end{align*}
These values match the expectation and variance of $1+\operatorname{Binomial}(N-2,\frac{1}{N})$, which represents the degree distribution of a fixed vertex in a uniformly random spanning tree with $N$ nodes \cite{pozrikidis2016node}.

	\section{Relation Between Edge Probability and Expected Node Degree in Spanning Trees}\label{sec:relationship-between-edge-probability-and-expected-node-degree-in-spanning-trees}

In this section, we will establish the relationship between the expected weighted degree and the presence probability of the edges incident to node $v$ in a spanning tree. Specifically, we show that the weighted sum of edge presence probabilities, when summed over the edges incident to $v$, equals the expected degree.

\begin{theorem}
	\label{th:Expected_node_degree}
	Let $G=(V,E,w,\omega)$ be an undirected, probability-weighted connected graph. The expected degree of a node $v$ in a random spanning tree $\tree\in \Setspt[G]$ is given by
	\[\mathbb{E}\left[\degv[v][\tree][\omega]\right]=\sum_{k\in\degv[v][\tree][\omega]}k\cdot\Prob\left(\tree\in \SetsptEw[k][G,v]\right)=\sum_{e\in\mathcal{E}_{G}(v)}\omega(e)\Prob(e\in \tree).\]
	Here, $\Prob(e\in \tree)$ represents the probability of edge $e$ being present in a spanning tree of $G$.
\end{theorem}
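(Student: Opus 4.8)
The plan is to prove the identity directly by linearity of expectation, after rewriting the weighted degree of $v$ in a fixed tree as a sum of edge-presence indicators ranging over \emph{all} edges incident to $v$ in $G$ (not merely those in the tree). The first equality in the statement is just the definition of the expected weighted degree recalled in \cref{sec:notation-and-preliminary-results}, so only the second equality requires an argument, and crucially that argument will not use the specific Gibbs form of $\Prob$.

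First I would fix an arbitrary spanning tree $\tree\in\Setspt[G]$ and note that the incident edges of $v$ that actually belong to $\tree$ are exactly $\mathcal{E}_{\tree}(v)=\{e\in\mathcal{E}_G(v) \ :\ e\in E_{\tree}\}=\mathcal{E}_G(v)\cap E_{\tree}$. Using this, the weighted degree can be expressed as a sum over the full incident set $\mathcal{E}_G(v)$ with an indicator that selects the present edges:
\[
\degv[v][\tree][\omega]=\sum_{e\in\mathcal{E}_{\tree}(v)}\omega(e)=\sum_{e\in\mathcal{E}_G(v)}\omega(e)\,\mathbf{1}\{e\in E_{\tree}\}.
\]
This rewriting is the heart of the proof: it turns a tree-dependent index set into a fixed index set, with all the randomness confined to the indicators.

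Next I would take the expectation over the distribution on $\Setspt[G]$ and exchange it with the finite sum, observing that each $\omega(e)$ is a deterministic constant while only $\mathbf{1}\{e\in E_{\tree}\}$ depends on the random tree:
\[
\mathbb{E}\left[\degv[v][\tree][\omega]\right]=\sum_{e\in\mathcal{E}_G(v)}\omega(e)\,\mathbb{E}\left[\mathbf{1}\{e\in E_{\tree}\}\right]=\sum_{e\in\mathcal{E}_G(v)}\omega(e)\,\Prob(e\in\tree),
\]
where the last step uses $\mathbb{E}\left[\mathbf{1}\{e\in E_{\tree}\}\right]=\Prob(e\in\tree)$. This yields the claimed formula.

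There is essentially no hard step here: the only point needing care is the first rewriting, namely verifying that replacing the sum over the tree's own incident edges $\mathcal{E}_{\tree}(v)$ by a sum over $G$'s incident edges $\mathcal{E}_G(v)$ weighted by presence indicators leaves the value unchanged, which is immediate from $\mathcal{E}_{\tree}(v)=\mathcal{E}_G(v)\cap E_{\tree}$. Everything else is linearity of expectation over a finite index set. I would emphasize that the argument is distribution-agnostic, so the identity holds for any probability measure on spanning trees; the particular value of $\Prob(e\in\tree)$ for the Gibbs distribution can then be supplied separately (e.g.\ via the Matrix Tree Theorem / effective resistance), linking this result back to the Laplacian-based expressions of \cref{th:Expectation_Variance_Laplacian}.
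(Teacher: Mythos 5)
Your proposal is correct and follows essentially the same route as the paper: the paper likewise writes $\degv[v][\tree][\omega]=\sum_{e\in\mathcal{E}_{G}(v)}\omega(e)X_e$ with $X_e$ the Bernoulli indicator of edge presence and concludes by linearity of expectation. Your version is if anything slightly more careful, spelling out the identification $\mathcal{E}_{\tree}(v)=\mathcal{E}_G(v)\cap E_{\tree}$ and noting that the argument is distribution-agnostic.
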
%
\begin{proof}
	First, we note that the random variable $\degv[v][\tree][\omega]$ can be expressed as the weighted sum of the Bernoulli random variables, $X_e$, representing the presence of the edges $e\in\mathcal{E}_{G}(v)$ in a random spanning tree. That is $\degv[v][\tree][\omega]=\sum_{e\in\mathcal{E}_{G}(v)}\omega(e)X_e$. Using the fact that the expectation is a linear operator, 
     we have
     $$\mathbb{E}\left[\degv[v][\tree][\omega]\right]=\mathbb{E}\left[\sum_{e\in\mathcal{E}_{G}(v)}\omega(e)X_e\right]=\omega(e)\sum_{e\in\mathcal{E}_{G}(v)}\mathbb{E}\left[X_e\right]=\sum_{e\in\mathcal{E}_{G}(v)}\omega(e)\Prob(e\in \tree).$$
	\vphantom{.}
\end{proof}

Although expressions for the presence probability of an edge in a random spanning tree are already established \cite{lyonsProbabilityTreesNetworks2016a}, \cref{th:edge_probability} illustrates how this probability can be derived using \cref{th:Expectation_Variance_Laplacian}. 
\begin{corollary}
	\label{th:edge_probability}
	Given a connected, weighted, undirected graph $G=(V,E,w)$, let $\Setspt[G]$ stand for the set of all spanning trees of $G$. For an edge $\hat{e}=(u,v)\in E$, the probability that $\hat{e}$ belongs to a spanning tree $\tree\in\Setspt[G]$ is given by
	\begin{equation}
	\Pr(\hat{e}\in \tree)=\begin{cases}
	w(\hat{e})\left(\ell_{uu}^{-1,[r]}+\ell_{vv}^{-1,[r]}-2\ell_{uv}^{-1,[r]}\right) &\text{ if } r\neq u,v\\
	w(\hat{e})\ell_{uu}^{-1,[v]} &\text{ if } r=v  \\
	w(\hat{e})\ell_{vv}^{-1,[u]} &\text{ if } r=u\\
	\end{cases},
	\label{eq:T_e_weighted_remove_r}
	\end{equation}
	where $\ell_{ij}^{-1,[r]}$ denotes the entry $ij$ of the inverse of the matrix $L_G^{[r]}$ (the Laplacian $L_G$ after removing the row and the column corresponding to node $r$).
\end{corollary}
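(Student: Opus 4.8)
The plan is to apply \cref{th:Expectation_Variance_Laplacian} with a degree weight function that isolates the single edge $\hat{e}=(u,v)$. Concretely, I would introduce the auxiliary degree weight
\[
\omega(e)=\begin{cases}1 & e=\hat{e},\\ 0 & \text{otherwise,}\end{cases}
\]
on $G$. This choice does not alter the tree distribution, since that depends only on $w$, but it collapses the weighted degree of $v$ to the Bernoulli indicator of $\hat{e}$, i.e. $\degv[v][\tree][\omega]=X_{\hat{e}}$. By \cref{th:Expected_node_degree} we then get $\mathbb{E}\left[\degv[v][\tree][\omega]\right]=\omega(\hat{e})\Prob(\hat{e}\in\tree)=\Prob(\hat{e}\in\tree)$, so it suffices to evaluate the Laplacian formula of \cref{th:Expectation_Variance_Laplacian} for this particular $\omega$.

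Next I would compute $\Lv[1]=L_{G_v\times_v\omega}$ under this choice. Every edge incident to $v$ except $\hat{e}$ is scaled to weight zero, so the graph $G_v\times_v\omega$ reduces to the single edge $\hat{e}$ carrying probability weight $w(\hat{e})$. Its Laplacian is thus the rank-one matrix $w(\hat{e})(\delta_u-\delta_v)(\delta_u-\delta_v)^\top$, where $\delta_u,\delta_v$ denote the standard basis vectors indexed by nodes $u$ and $v$. Substituting into \cref{th:Expectation_Variance_Laplacian} yields
\[
\Prob(\hat{e}\in\tree)=\operatorname{Tr}\left[\left(\Lv[1]\right)^{[r]}\left(L_G^{[r]}\right)^{-1}\right],
\]
and the remaining work is to expand this trace according to the position of the deleted node $r$.

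The three cases of the corollary arise from how deleting the row and column indexed by $r$ interacts with the rank-one vector $\delta_u-\delta_v$. When $r\neq u,v$, both indices survive, so the cyclic invariance of the trace turns the expression into the quadratic form $w(\hat{e})(\delta_u-\delta_v)^\top \left(L_G^{[r]}\right)^{-1}(\delta_u-\delta_v)$, which expands to $w(\hat{e})\left(\ell_{uu}^{-1,[r]}+\ell_{vv}^{-1,[r]}-2\ell_{uv}^{-1,[r]}\right)$ using the symmetry $\ell_{uv}^{-1,[r]}=\ell_{vu}^{-1,[r]}$. When $r=v$, deleting the coordinate $v$ collapses $\delta_u-\delta_v$ to $\delta_u$, so $\left(\Lv[1]\right)^{[v]}=w(\hat{e})\,\delta_u\delta_u^\top$ and the trace reduces to $w(\hat{e})\ell_{uu}^{-1,[v]}$; the case $r=u$ is symmetric and gives $w(\hat{e})\ell_{vv}^{-1,[u]}$.

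I expect the only delicate point to be the bookkeeping of the index removal inside the rank-one factor — specifically, verifying that deleting coordinate $u$ (respectively $v$) annihilates only the corresponding entry of $\delta_u-\delta_v$ rather than affecting the surviving submatrix. This is a routine linear-algebra check once the rank-one form of $\Lv[1]$ is in hand, and no tools beyond the cyclic property of the trace and the symmetry of $\left(L_G^{[r]}\right)^{-1}$ are required.
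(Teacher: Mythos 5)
Your proposal is correct and follows exactly the paper's own argument: set $\omega$ to the indicator of $\hat{e}$ so that the expected degree of $v$ collapses to $\Prob(\hat{e}\in\tree)$, then apply \cref{th:Expectation_Variance_Laplacian}. The paper leaves the trace computation implicit, whereas you work out the rank-one form of $\Lv[1]$ and the three cases explicitly; that expansion is accurate and simply fills in detail the paper omits.
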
	
\begin{proof}
	By setting the degree weight function equal to 
	$$\omega(e)=\begin{cases}
		1 &\text{ if } e=\hat{e},\\
		0 &\text{otherwise}
	\end{cases},$$
	we find that expected degree of $v$ (or $u$) is equal to the probability that $\hat{e}$ is present in a random spanning tree. Applying \cref{th:Expectation_Variance_Laplacian}, we obtain the desired result. 
\end{proof}

    \section{Extension to Directed Graphs}\label{sec:extension-to-directed-graphs}

We can extend the presented results to the directed case, where the probability and degree weight functions are asymmetric. In other words, we consider graphs where $w(i,j)\neq w(j,i)$ and/or $\omega(i,j)\neq \omega(j,i)$ for any arbitrary vertices $i$ and $j$. In this section, we will briefly expose how the theorems generalize when a directed graph is considered.

The cornerstone of our derivations is the Matrix Tree Theorem for undirected graphs. The Matrix Tree Theorem can be generalized to directed graphs by considering incoming directed trees, in-trees for brevity, rooted at a node $r$ \cite{leenheer2019}. 
\begin{definition}\label{def:in-tree}
	Let $\tree=(V_{\tree},E_{\tree})$ be a directed graph. We say that $\tree$ is an incoming directed spanning tree rooted at $r\in V_{\tree}$ (in-tree for short), if for any vertex $u\in V_{\tree}$ there is exactly one directed path from $u$ to the root $r$, and the root does not have any out-going edges. Note that an in-tree becomes a tree in the classical sense if the directions of the edges are ignored.
	
	We say that an in-tree is spanning on $G=(V,E)$ if $\tree$ is a subgraph of $G$ and $V_{\tree}=V$. The set of spanning in-trees of $G$ rooted at $r$ will be denoted by $\SetTinc[r]$.
\end{definition}
The directed version of the Matrix Tree Theorem \cite{leenheer2019} asserts that
\[w(\SetTinc[r])=\sum_{\tree\in\SetTinc[r]}\prod_{e\in E_{\tree}}w(e)=\det(L_G^{[r]}),\]
where the Laplacian for directed graphs is defined as $L_G\coloneqq D-A^{\top}$ with $A$ being the vertex-adjacency matrix of $G$ represented entry-wise as $A_{ij}=w\big((i,j)\big)$ and $D$ denoting the diagonal matrix defined as $D_{ii}=\sum_{j\in V}A_{ij}$, i.e., $D_{ii}$ is the out-degree of vertex $i$.

Consequently, it can be inferred from the preceding theorems, with appropriate adjustments, that the same formulae can be applied to compute the expected weighted degree of any node $v$ and its variance in a random spanning in-tree rooted at $r$. This holds true when considering the analogous probability distribution defined over the spanning trees, as stated in \eqref{eq:prob_tree_def}, but in this case applied to all in-trees rooted at $r$.

	\section{Conclusion}\label{sec:degree_conclusion}
	We have provided analytical expressions to compute the expected degree of a node in a random spanning tree of a weighted graph, as well as its variance and covariance. These are expressed in terms of the inverse of the Laplacian, once a row and a column have been removed. In addition, we have also showed in \cref{sec:extension-to-directed-graphs} that the same expressions extend to the directed case, where instead spanning incoming trees are considered.

We hope that understanding the expected degree of a node in random spanning trees will inform the future design of methods for analyzing various network dynamics and processes, such as information diffusion, routing efficiency, and resilience to failures.
	
	\section*{Acknowledgments}
This work is supported by the Deutsche Forschungsgemeinschaft (DFG, German Research Foundation) under Germany's Excellence Strategy EXC 2181/1 - 390900948 (the Heidelberg STRUCTURES Excellence Cluster).

	\bibliographystyle{abbrv}
	\bibliography{Degree_node_spanning_tree}
	
	\appendix
	\section{Covariance}\label{sec:covariance}
In this section, we will expose how to compute the covariance between $ \degv[v][\tree][\omega]$ and $\degv[u][\tree][\omega]$ for two nodes $u$ and $v$. The idea of the proof follows a similar approach to the one exposed in the main part. Namely, we show that the covariance can also be expressed in terms of the derivatives of a polynomial, which is also closely related to the Laplacian matrix of a scaled version of $G$. However, the polynomial differs from the one defined in \cref{def:polynomials_tree_degree}, as it depends on two variables and generalizes the previous polynomial.

Given two nodes $u,v\in V$, let
\begin{equation}\label{eq:poly_cov}
	\polyxy[v,u](x,y)\coloneqq\sum_{\substack{k_1\in \Degv[v]\\k_2\in \Degv[u]}}(-1)^{k_1+k_2}w\left(\SetsptEw[k_1,k_2][G,v,u]\right)x^{k_1}y^{k_2}.
\end{equation}
 Here, $\SetsptEw[k_1,k_2][G,v,u]$ represents the set of spanning trees where node $v$ has degree $k_1$ and node $u$ degree $k_2$, i.e.
$$\SetsptEw[k_1,k_2][G,v,u]\coloneqq\{\tree\in\Setspt[G]\ :\ \degv[v][G][\omega]=k_1, \ \degv[u][G][\omega]=k_2\}.$$

Since $\sum_{\substack{k_2\in \Degv[u]}}w\left(\SetsptEw[k_1,k_2][G,v,u]\right)=w\left(\SetsptEw[k_1][G,v]\right)$, we can see indeed that the polynomial $\polyxy[v,u]$ generalizes the polynomials $\polyw[v]$ and $\polyw[u]$ because
\begin{equation}\label{eq:relation_polyxy_polyw}
	\begin{aligned}
	\polyxy[v,u](x,-1)&=\polyw[v](x)\\
	\polyxy[v,u](-1,y)&=\polyw[u](y)
	\end{aligned}
\end{equation}
Moreover, similarly to \cref{lem:polynomials_count_trees} and equation \eqref{eq:poly_Laplacian_relation1}, we can also apply the Matrix Tree Theorem to relate the polynomial with the determinant of a Laplacian of a scaled version of $G$. Namely 
\begin{equation}\label{eq:polyxy_laplacian}
	\det(L_{\polyxyG})=\wpolyxyG\left(\Setspt[\polyxyG]\right)=\polyxy[u,v](-\alpha,-\beta)
\end{equation}
where $\polyxyG\coloneqq  (G\times_v \alpha^{\omega})\times_u\beta^{\omega}$, whose associated weight probability function is represented by $\wpolyxyG$. 

From here, we can apply a similar approach to the one shown for the expectation and variance in \cref{th:Expectation_Variance_Laplacian} and relate the derivatives of the Laplacian $L_{\polyxyG}$ to the covariance between $\degv[v][\tree][\omega]$ and $\degv[u][\tree][\omega]$.

Indeed, we have that 
\begin{equation}\label{eq:cov_poly}
	\begin{aligned}
		\frac{\partial \log(\polyxy[u,v](-x,-y))}{\partial x \partial y}\Biggr|_{(x,y)=(1,1)}&=\left[\frac{\frac{\partial \polyxy[u,v](-x,-y)}{\partial x \partial y}}{\polyxy[u,v](-x,-y)}-\frac{\frac{\partial \polyxy[u,v](-x,-y)}{\partial x }}{\polyxy[u,v](-x,-y)}\frac{\frac{\partial \polyxy[u,v](-x,-y)}{\partial y }}{\polyxy[u,v](-x,-y)}\right]\Biggr|_{(1,1)}\\
		&\overset{*}{=}\frac{\frac{\partial \polyxy[u,v](-x,-y)}{\partial x \partial y}}{\polyxy[u,v](-x,-y)}\Biggr|_{(x,y)=(1,1)}-\frac{\polyw[v]'(-1)}{\polyw[v](-1)}\frac{\polyw[u]'(-1)}{\polyw[u](-1)}\\
		&\overset{\star}{=}\mathbb{E}\left[\degv[v][\tree][\omega]\degv[u][\tree][\omega]\right]-\mathbb{E}\left[\degv[v][\tree][\omega]\right]\mathbb{E}\left[\degv[u][\tree][\omega]\right]\\&=\Cov(\degv[v][\tree][\omega],\degv[u][\tree][\omega]).
	\end{aligned}	
\end{equation}
In $*$, we have used the equalities stated in \eqref{eq:relation_polyxy_polyw}. For the second summand of the equality marked with $\star$, we applied \cref{th:poly_expvar}. The first summand equality comes from
\begin{equation}
	\begin{aligned}
		\frac{\partial\polyxy[u,v](-x,-y)}{\partial x \partial y}&=\sum_{\substack{k_1\in \Degv[v]\\k_2\in \Degv[u]}}(-1)^{k_1+k_2+2}k_1k_2w\left(\SetsptEw[k_1,k_2][G,v,u]\right)x^{k_1-1}y^{k_2-1} \xrightarrow[y=1]{x=1}\\
		\frac{\partial(\polyxy[u,v](-1,-1))}{\partial x \partial y}&=\sum_{\substack{k_1\in \Degv[v]\\k_2\in \Degv[u]}}k_1k_2w\left(\SetsptEw[k_1,k_2][G,v,u]\right)\\
  &=\omega(\Setspt)\sum_{\substack{k_1\in \Degv[v]\\k_2\in \Degv[u]}}k_1k_2\frac{w\left(\SetsptEw[k_1,k_2][G,v,u]\right)}{\omega(\Setspt)}\\
  &=\omega(\Setspt)\mathbb{E}\left[\degv[v][\tree][\omega]\degv[u][\tree][\omega]\right]=\polyxy[u,v](-1,-1)\mathbb{E}\left[\degv[v][\tree][\omega]\degv[u][\tree][\omega]\right]
	\end{aligned}
\end{equation}

Finally, by combining equations \eqref{eq:polyxy_laplacian} and \eqref{eq:cov_poly}, we obtain the result after computing the corresponding derivatives, a step we will omit:
\begin{equation}\label{eq:covariance_laplacian}
	\begin{aligned}
		\Cov(\degv[v][\tree][\omega],\degv[u][\tree][\omega])&=\frac{\partial \log(\polyxy[u,v](-\alpha,-\beta))}{\partial \alpha \partial \beta}\Biggr|_{(\alpha,\beta)=(1,1)}=\frac{\det(L_{\polyxyG})}{\partial \alpha \partial \beta}\Biggr|_{(\alpha,\beta)=(1,1)}\\
		&=\operatorname{Tr}\left[\left(\left(\Lvu[2][v][u]\right)^{[r]}-\left(\Lv[1]\right)^{[r]} \left(L_G^{[r]}\right)^{-1}\left(\Lu[1]\right)^{[r]} \right)\left(L_G^{[r]}\right)^{-1}\right]
	\end{aligned}
\end{equation}
Here $\left(\Gwp[2]_{v}\right)_{u}\coloneqq \left(G_v\times_v \omega^2\right)_u$ is the graph containing the edge $(u,v)$, if it exists in $G$, otherwise is the empty graph with no edges. 

\begin{remark}
	Note that given $e_1=(v_1,u_1),e_2=(v_2,u_2)\in E$ and the degree weight function 
	$$\omega(e)=\begin{cases}1 &e\in\{e_1,e_2\}\\ 0 &\text{otherwise}\end{cases},$$
	the covariance between \( \degv[v_1][\tree][\omega] \) and \( \degv[v_2][\tree][\omega] \) equals the covariance of the presence of the two edges in a random spanning tree.
\end{remark}

\section{Expectations of Spanning Trees over Additively Decomposable Functions}\label{sec:add_decomposable}
Given a weighted graph $G=(V,E,w,\omega)$, the work of Zmigrod et al. \cite{zmigrodEfficientComputationExpectations2021} proposes a unified approach to computing expectations over spanning trees when the functions are additively decomposable along the edges of the tree. Formally, they provide an algorithm to compute the $\mathbb{E}[r(\tree)]$, where $T\in\Setspt[G]$ is sampled with probability proportional to its weight, as described in \cref{eq:prob_tree_def}, and where $r$ is additively decomposable along the edges,  meaning it can be expressed as $r(\tree)=\sum_{e\in E_{\tree}}\omega(e)$. Under such functions, one can compute values such as the Shannon entropy of the spanning tree probability distribution or the KL-divergence between two spanning trees distributions.

Thanks to the linearity of the expectation operator, $\mathbb{E}[r(\tree)]$ can be expressed as a sum of expectations over the node degrees.
\begin{equation}
    \label{eq:expec_decomp=sum_expec_deg}
    \sum_{v\in V}\mathbb{E}\left[\degv[v][\tree][\omega]\right]= \sum_{v\in V}\mathbb{E}\left[\sum_{e\in\mathcal{E}_{G}(v)}\omega(e)X_e\right]=\mathbb{E}\left[2\sum_{e\in E_T}\omega(e)X_e(T)\right]=2\mathbb{E}\left[\sum_{e\in E_T}r(\tree)\right]
\end{equation}
where $X_e(\tree)$ are Bernoulli random variables representing the presence of the edges $e$ in a random spanning tree. 
Thus, our framework, can also be used to express expectations over additively decomposable functions, as the ones proposed in \cite{zmigrodEfficientComputationExpectations2021}. However, our approach allows these formulas to be expressed in terms of the Laplacian in a more compact manner, as shown in \cref{th:lap_expec_add_decomp.}. Moreover, we are able to express the variance of additively decomposable functions in closed form, which was not addressed in \cite{zmigrodEfficientComputationExpectations2021}.

\begin{corollary}\label{th:lap_expec_add_decomp.}
    Given a weighted graph $G=(V,E,w,\omega)$ and the additively decomposable function along the edges of the tree associated with the degree-weight function $\omega$, i.e. $r(\tree)=\sum_{e\in E_{\tree}}\omega(e)$, then
    \begin{align}
        \mathbb{E}\left[r(\tree)\right]&=\operatorname{Tr}\left[L_{G\times\omega}^{[r]}\left(L_{G}^{[r]}\right)^{-1}\right],\label{eq:exp_Lap_add_decomp}\\
        \Var\left[r(\tree)\right]&=\operatorname{Tr}\left[\left(L_{G\times\omega^2}^{[r]}-L^{[r]}_{G\times\omega}\left(L_{G}^{[r]}\right)^{-1}L^{[r]}_{G\times\omega}\right)\left(L_{G}^{[r]}\right)^{-1}\right],\label{eq:var_Lap_add_decomp}
    \end{align}
    where $L_{G}$, $L_{G\times\omega}$ and $L_{G\times\omega^2}$ denote the Laplacian of $G$ using the edge weights given by $w$, the product of $\omega$ and $w$ and the product of $\omega^2$ and $w$, respectively.
\end{corollary}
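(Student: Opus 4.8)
The plan is to obtain both identities from a single \emph{global} rescaling of $G$, reusing verbatim the machinery of \cref{lem:polynomials_count_trees}, \cref{th:poly_expvar} and \cref{th:Expectation_Variance_Laplacian} but with the scaling applied to every edge rather than only those incident to a fixed node. Concretely, I would introduce the graph $G\times\alpha^{\omega}$ whose probability weights are $w(e)\alpha^{\omega(e)}$ for $\alpha>0$ (the natural analogue of $\polywG=G\times_v\alpha^{\omega}$). By the Matrix Tree Theorem (\cref{Th:Matrix_tree}), and exactly as in \cref{lem:polynomials_count_trees}, each spanning tree then carries weight $\prod_{e\in E_{\tree}}w(e)\alpha^{\omega(e)}=w(\tree)\alpha^{r(\tree)}$, whence
\[
Z(\alpha)\coloneqq\det\!\left(L_{G\times\alpha^{\omega}}^{[r]}\right)=\sum_{\tree\in\Setspt[G]}w(\tree)\,\alpha^{r(\tree)}.
\]
This $Z(\alpha)$ is the weight-tilted probability generating function of $r(\tree)$, and it plays the role that $\polyw(-\alpha)$ played in the main text.

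Next I would extract the two moments by the same log-derivative identities as in \cref{th:poly_expvar}. Setting $f(\alpha)=\alpha\,\partial_\alpha\log Z(\alpha)=\alpha Z'(\alpha)/Z(\alpha)$, a direct computation of the ratio $\sum_{\tree}w(\tree)r(\tree)\alpha^{r(\tree)}\big/\sum_{\tree}w(\tree)\alpha^{r(\tree)}$ at $\alpha=1$ gives $f(1)=\mathbb{E}[r(\tree)]$ and $f'(1)=\mathbb{E}[r(\tree)^2]-\mathbb{E}[r(\tree)]^2=\Var[r(\tree)]$. These are the word-for-word analogues of \eqref{eq:relation log_poly with expectation} and \eqref{eq:Var=p}, with the single variable $\alpha$ replacing $x$ and with no alternating signs needed (the sum is positive).

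I would then translate $f$ into Laplacian form using the matrix derivative rules \eqref{eq:matrix_derivatives}, mirroring the proof of \cref{th:Expectation_Variance_Laplacian}. The only computation specific to the global scaling is the identification of the relevant Laplacian derivatives: since the edge weights are $w(e)\alpha^{\omega(e)}$, we have $\alpha\,\partial_\alpha L_{G\times\alpha^{\omega}}=L_{G\times\omega\alpha^{\omega}}$, whose value at $\alpha=1$ is $L_{G\times\omega}$; differentiating once more brings down a second factor $\omega(e)$ and yields $L_{G\times\omega^2}$ at $\alpha=1$. Feeding $\alpha\,\partial_\alpha L_{G\times\alpha^{\omega}}$ into $f(1)=\operatorname{Tr}\!\left[\alpha\,\partial_\alpha L_{G\times\alpha^{\omega}}^{[r]}\left(L_{G\times\alpha^{\omega}}^{[r]}\right)^{-1}\right]\big|_{\alpha=1}$ gives \eqref{eq:exp_Lap_add_decomp}, and differentiating this once more — using $\partial(X^{-1})=-X^{-1}(\partial X)X^{-1}$ exactly as in \cref{th:Expectation_Variance_Laplacian} — splits into the two traces of \eqref{eq:var_Lap_add_decomp}.

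As an independent check of the expectation, I would note that it also follows by black-boxing \cref{th:Expectation_Variance_Laplacian}: equation \eqref{eq:expec_decomp=sum_expec_deg} gives $\sum_{v\in V}\degv[v][\tree][\omega]=2\,r(\tree)$, and since each edge is incident to exactly two nodes the single-node Laplacians sum to $\sum_{v\in V}\Lv[1]=2\,L_{G\times\omega}$, so the node-wise expectation formula of \cref{th:Expectation_Variance_Laplacian} recovers \eqref{eq:exp_Lap_add_decomp} after dividing by $2$. I expect no deep obstacle: the corollary is essentially \cref{th:Expectation_Variance_Laplacian} with a global rather than node-localized scaling, so the only care needed is (a) verifying the Matrix-Tree identity for $Z(\alpha)$ under simultaneous scaling of all edges, and (b) the routine matrix-derivative bookkeeping for the variance, which is structurally identical to the one already carried out. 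I would deliberately avoid the purely black-box route for the variance, since writing $\Var[r(\tree)]=\frac14\sum_{u,v}\Cov(\degv[v][\tree][\omega],\degv[u][\tree][\omega])$ and summing the covariance formula \eqref{eq:covariance_laplacian_main} over all node pairs is considerably messier than the direct global-scaling derivation.
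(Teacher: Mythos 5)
Your proof is correct, but it takes a genuinely different route from the paper's. The paper proves the corollary entirely by black-boxing \cref{th:Expectation_Variance_Laplacian}: for the mean it uses linearity together with $\sum_{v\in V}\left(\Lv[1]\right)^{[r]}=2L^{[r]}_{G\times\omega}$ (this is exactly your ``independent check''), and for the variance it uses precisely the route you deliberately avoid, namely $\Var[r(\tree)]=\tfrac14\sum_{v}\Var[\degv[v][\tree][\omega]]+\tfrac14\sum_{v\neq u}\Cov(\degv[v][\tree][\omega],\degv[u][\tree][\omega])$, substituting \eqref{eq:variance_laplacian_main} and \eqref{eq:covariance_laplacian_main} and collapsing the sums via $\sum_{v}\left(\Lv[2]\right)^{[r]}+\sum_{v\neq u}\left(\Lvu[2][v][u]\right)^{[r]}=4L^{[r]}_{G\times\omega^2}$ and $\sum_{v,u}\left(\Lv[1]\right)^{[r]}\left(L_G^{[r]}\right)^{-1}\left(\Lu[1]\right)^{[r]}=4L^{[r]}_{G\times\omega}\left(L_G^{[r]}\right)^{-1}L^{[r]}_{G\times\omega}$. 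Your global-tilt argument instead treats $Z(\alpha)=\det\left(L^{[r]}_{G\times\alpha^{\omega}}\right)=\sum_{\tree\in\Setspt[G]}w(\tree)\alpha^{r(\tree)}$ as the generating function of $r(\tree)$ and extracts the first two cumulants by the operator $\alpha\partial_\alpha\log(\cdot)$; all the individual steps (the Matrix-Tree identity under simultaneous scaling, $\alpha\partial_\alpha L_{G\times\alpha^{\omega}}\big|_{\alpha=1}=L_{G\times\omega}$, the second application giving $L_{G\times\omega^2}$, and the $\partial(X^{-1})$ bookkeeping) check out. What each buys: the paper's route is shorter \emph{given} that the covariance formula of \cref{sec:covariance} is already established, but hides the combinatorial bookkeeping in the two summation identities above; yours is self-contained, bypasses the two-variable polynomial and the covariance machinery entirely, and makes clear that the corollary is just \cref{th:Expectation_Variance_Laplacian} with the node-localized scaling $\times_v$ replaced by a global one. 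One pedantic caveat worth stating if you write this up: when $\omega$ takes non-integer values $Z(\alpha)$ is not a polynomial but a finite sum of powers $\alpha^{r(\tree)}$, so the log-derivative manipulations should be justified for $\alpha>0$ — the same caveat that already applies to $\polyw$ in the main text.
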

\begin{proof}
    \Cref{eq:exp_Lap_add_decomp} is a direct consequence of applying \cref{th:Expectation_Variance_Laplacian} to \cref{eq:expec_decomp=sum_expec_deg}:
    \begin{equation*}
        \begin{aligned}
            \mathbb{E}\left[r(\tree)\right]&\underbrace{=}_{\text{\cref{{eq:expec_decomp=sum_expec_deg}}}}\frac{1}{2}\sum_{v\in V}\mathbb{E}\left[\degv[v][\tree][\omega]\right]\underbrace{=}_{\text{\cref{{th:Expectation_Variance_Laplacian}}}}
    \frac{1}{2}\sum_{v\in V}\operatorname{Tr}\left[\left(\Lv[1]\right)^{[r]}\left(L_{G}^{[r]}\right)^{-1}\right]\\
    &=\frac{1}{2}\operatorname{Tr}\left[\sum_{v\in V}\left(\Lv[1]\right)^{[r]}\left(L_{G}^{[r]}\right)^{-1}\right]=
    \frac{1}{2}\operatorname{Tr}\left[2L_{G\times\omega}^{[r]}\left(L_{G}^{[r]}\right)^{-1}\right]\\
    &=\operatorname{Tr}\left[L_{G\times\omega}^{[r]}\left(L_{G}^{[r]}\right)^{-1}\right]
        \end{aligned}
    \end{equation*}
    Analogously, \cref{eq:exp_Lap_add_decomp} follows from  \cref{th:Expectation_Variance_Laplacian} combined with the variance sum law
    \begin{equation*}
        \begin{aligned}
        \Var\left[r(\tree)\right]&=\Var\left[\frac{1}{2}\sum_{v\in V}\degv[v][\tree][\omega]\right]=\frac{1}{4}\sum_{v\in V} \Var\left[\degv[v][\tree][\omega]\right]+\frac{1}{4}\sum_{\substack{v,u\in V\\v\neq u}}\Cov\left[\degv[v][\tree][\omega],\degv[u][\tree][\omega]\right]\\
        &=\frac{1}{4}\sum_{v\in V}\operatorname{Tr}\left[\left(\left(\Lv[2]\right)^{[r]}-\left(\Lv[1]\right)^{[r]} \left(L_G^{[r]}\right)^{-1}\left(\Lv[1]\right)^{[r]} \right)\left(L_G^{[r]}\right)^{-1}\right]\\
        &+\frac{1}{4}\sum_{\substack{v,u\in V\\v\neq u}}\operatorname{Tr}\left[\left(\left(\Lvu[2][v][u]\right)^{[r]}-\left(\Lv[1]\right)^{[r]} \left(L_G^{[r]}\right)^{-1}\left(\Lu[1]\right)^{[r]} \right)\left(L_G^{[r]}\right)^{-1}\right]\\
        &=\operatorname{Tr}\left[\left(L_{G\times\omega^2}^{[r]}-L^{[r]}_{G\times\omega}\left(L_{G}^{[r]}\right)^{-1}L^{[r]}_{G\times\omega}\right)\left(L_{G}^{[r]}\right)^{-1}\right]
        \end{aligned}
    \end{equation*}    
    \phantom{-}
\end{proof}

\end{document}